\documentclass{amsart}
\usepackage{amsfonts}
\usepackage{amsmath,amscd,lscape}
\usepackage{amsthm}
\usepackage{amssymb}
\usepackage{latexsym}
\usepackage{mathrsfs}
\usepackage{tikz}

\usepackage[margin=3cm]{geometry}
\usepackage{MnSymbol} 


\newcommand{\be}{\begin{equation}}
\newcommand{\ee}{\end{equation}}
\newcommand{\ba}{\begin{eqnarray}}
\newcommand{\ea}{\end{eqnarray}}
\newcommand{\baa}{\begin{eqnarray*}}
\newcommand{\eaa}{\end{eqnarray*}}
\newcommand{\bb}{}

\newcommand{\bi}[1]{\bibitem{#1}}
\newcommand{\lab}[1]{\label{#1}}
\newcommand{\re}[1]{(\ref{#1})}



\newcounter{my}
\newcommand{\he}%
   {\stepcounter{equation}\setcounter{my}%
   {\value{equation}}\setcounter{equation}0%
   }%
\newcommand{\she}%
   {\setcounter{equation}{\value{my}}%
    }%

\renewcommand\t{\tilde}

\newcommand\boX{\textsf{X}}
\newcommand\boW{\textsf{W}}
\newcommand\boZ{\textsf{Z}}
\newcommand\boH{\textsf{H}}

\newcommand\oZ{\overline{Z}}

\newcommand\vphi{\varphi}

\newtheorem{pr}{Proposition}

\newtheorem{theorem}{Theorem}[section]

\newtheorem{lemma}[theorem]{Lemma}
\newtheorem{definition}[theorem]{Definition}
\theoremstyle{definition}

\newtheorem{remark}[theorem]{Remark}

\numberwithin{equation}{section}

\begin{document}

\title{The Heun-Askey-Wilson algebra and  the Heun operator of Askey-Wilson type}

\author{Pascal Baseilhac}
\author{Satoshi Tsujimoto}
\author{Luc Vinet}
\author{Alexei Zhedanov}

\address{Institut Denis-Poisson CNRS/UMR 7013 - Universit\'e de Tours - Universit\'e d'Orl\'eans Parc de Grammont,
37200 Tours, FRANCE}

\address{Department of Applied Mathematics and Physics, Graduate School of Informatics, Kyoto University,
Yoshida-Honmachi, Kyoto, Japan 606--8501}

\address{Centre de recherches \\ math\'ematiques,
Universit\'e de Montr\'eal, P.O. Box 6128, Centre-ville Station,
Montr\'eal (Qu\'ebec), H3C 3J7}

\address{School of Mathematics, Renmin University of China, Beijing 100872,CHINA}

\begin{abstract}
The Heun-Askey-Wilson algebra is introduced through generators $\{\boX,\boW\}$ and relations. These relations can be understood as an extension of the usual Askey-Wilson ones. A central element is given, and a canonical form of the  Heun-Askey-Wilson algebra is presented. A homomorphism from the Heun-Askey-Wilson algebra to the Askey-Wilson one is identified.  On the vector space of the polynomials in the variable $x=z+z^{-1}$, the Heun operator of Askey-Wilson type realizing $\boW$ can be characterized as the most general second order $q$-difference operator in the variable $z$ that maps polynomials of degree $n$ in $x=z+z^{-1}$ into polynomials of degree $n+1$. 
\end{abstract}

\keywords{Askey-Wilson algebra; Heun operator; Orthogonal polynomials}


\maketitle

\section{Introduction}
\setcounter{equation}{0}

We here introduce a $q$-analog ${\mathcal W}$ of the Heun operator that has the property of acting tridiagonally on Askey-Wilson polynomials. We moreover define in terms of generators and relations, the extension of the Askey-Wilson algebra that results when ${\mathcal W}$ is combined with either one of the Askey-Wilson bispectral operators. This generalized structure will be referred to as the Heun-Askey-Wilson (HAW) algebra.

This paper is part of a recent set of articles dedicated to the Heun equation and the construction of its difference analogs on various lattices. This series of studies is rooted in the observation made in \cite{GVZ_band} that a so-called algebraic Heun operator (AHO) can be associated to any bispectral problem. This AHO is simply the generic bilinear combination of the two operators defining the bispectral setting. The motivation for the introduction in \cite{GVZ_band} of the notion of AHO was to identify  these constructs as the commuting operators in problems of the time and band limiting type. It is well known that the ($q$-)hypergeometric polynomials of the Askey scheme \cite{KLS}
all present bispectral instances whose properties are encoded in quadratic algebras \cite{GLZ_Annals} generated by the corresponding bispectral operators. These algebras are typically referred to by the name of the family of polynomials to which they are associated. It thus follows that an AHO can be attached to every entry of the Askey tableaux. Moreover the construction of the corresponding AHO amounts to the most general tridiagonalization \cite{IK1,IK2, GIVZ} of one of the defining operator of the polynomials. The reason for the AHO nomenclature rests with the fact that the tridiagonalization of the hypergeometric operator which provides the AHO connected to the Jacobi polynomials, precisely corresponds to the standard Heun operator \cite{GVZ_Heun}. The Heun operator is also determined by the fact that it is the most general second order differential operator that maps polynomials of degree $n$ onto polynomials of degree $n+1$. It is striking that this condition and the AHO construction based on Jacobi polynomials yield equivalent characterizations. This has useful consequences that have been spelled out in \cite{GVZ_Heun}. Even more striking is the fact that this pattern extends to the difference operator realm.

The first observation in this direction has been made in \cite{VZ_HH} where it has been shown that the most general second order difference operator on the uniform lattice that increases by one the degree of polynomials defined on this grid, coincides in fact with the algebraic Heun operator associated to the Hahn polynomials. This has led to what was called the Heun operator of the Hahn type which provides a difference analog of the standard Heun equation. $Q$-analogs of the Heun operator were subsequently studied in \cite{BVZ} through a parallel analysis by replacing the uniform by the exponential lattice. The most general second order $q$-difference operator with the degree raising property was called the big $q$-Heun operator and shown to be identical to the AHO connected to the big $q$-Jacobi polynomials. A special case  readily seen to be the AHO of the little $q$-Jacobi polynomials has been called the little $q$-Heun operator. The big and little $q$-Heun operators were recognized to be two of the $q$-Heun operators introduced by Takemura \cite{Takemura} as degenerations of the Ruijenaars-van Diejen operators \cite{Ruijsenaars, Diejen} thereby adding to the understanding of these topics. We now follow up with possibly the most important case: the $q$-analog of the Heun equation on the Askey grid.

The tridiagonalization method has been used to inform the theory of higher polynomials (in the Askey scheme) from the lower ones \cite{IK1, IK2, GIVZ}. The basic idea is to built the defining operator of the higher polynomials as special bilinear combinations of the bispectral operators of the lower polynomials. This splits the parameters of the higher polynomials in two classes: on the one hand, the set of parameters pertaining to the lower polynomials and on the other, the parameters entering the tridiagonalization. From an algebraic perspective, such special tridiagonalizations effect embeddings of the quadratic algebra associated to the higher polynomials into the algebra of the lower system. This is done by replacing the defining operator of the lower polynomials by the tridiagonalized one. In the q-sector, one has the Askey-Wilson (AW) algebra \cite{Zh_AW} at the highest level. Obviously the general tridiagonalization that yields AHOs takes one outside the framework of orthogonal polynomials. It is of interest to provide an algebraic picture of this situation and to unravel the structure that extends the AW algebra to an HAW algebra. This will also be done here.

The outline of this paper is as follows. In Section 2, we introduce the Heun-Askey-Wilson (HAW) algebra. This algebra generalizes the Askey-Wilson algebra, whose defining relations can  be recovered for certain choices of the structure constants. A central element is constructed.  Also, for generic values of the deformation parameter $q$, a canonical presentation of the HAW algebra is exhibited. In Section 3, we study the relation between the HAW and the Askey-Wilson algebras. An explicit homomorphism from the  HAW algebra to the Askey-Wilson one is given,  inducing a new presentation that is closely related with the ``${\mathbb Z}_3$ symmetric'' form of the AW algebra. In Section 4 we construct the Heun-AW operator ${\mathcal W}$ as the most general $q$-derivative operator of second order in the variable $z$ that maps polynomials of order $n$ in the variable $x=z+z^{-1}$ into polynomials of degree $n+1$ in the same variable $x$. We note that the AW operator is a special case of ${\mathcal W}$. We also show that ${\mathcal W}$ is the same as the algebraic Heun operator associated to the Askey-Wilson polynomials. That ${\mathcal W}$ acts tridiagonally on AW polynomials is seen to follow. We also describe bispectrality properties of the Heun-AW operator ${\mathcal W}$. Concluding remarks in Section 5 will bring the paper to a close.\vspace{2mm}

{\bf Notations:} The parameter $q$ is assumed not to be a root of unity and is different than $1$ except when explicitly specified. We write $[X,Y]_q = qXY - q^{-1}YX$, the commutator $[X,Y]=[X,Y]_{q=1}$ and anticommutator $\{X,Y\}=XY+YX$. The identity element is $\mathcal{I}$.

\section{The Heun-Askey-Wilson algebra}
In this section, the Heun-Askey-Wilson algebra is introduced and some of its properties are exhibited. This algebra can be seen as a non-trivial deformation of the Askey-Wilson algebra. A central element is constructed, which generalizes the central element of the AW algebra. Also, the `canonial' form of the HAW algebra is described.
\begin{definition} The Heun-Askey-Wilson algebra  HAW  is generated by $\boX$ and $\boW$ subject to the following two relations:
\ba
&&[\boX,[\boX,\boW]]= \rho \boX\boW\boX + e_1 \boX^3+  b_1 \boX^2 + b_2 \{\boX,\boW\} + b_3 \boX + b_4 \boW + b_5 \mathcal{I} \ , \label{hawi1} \\
&&[\boW[\boW,\boX]]= \rho \boW\boX\boW + e_2 \boX^3 + e_3 \boX\boW\boX + e_4 \boX^2 + b'_1\{\boX,\boW\} + b_2 \boW^2 + b'_3 \boW + b_6 \boX + b_7 \mathcal{I} \lab{H_XW} \ , \qquad \label{hawi2}
\ea
where $\rho=q^2+q^{-2}-2$, $\{e_1,e_2,e_4\}$, $\{b_i\}$  are generic scalars and
\be
e_3 = e_1(\rho+3), \quad \; b_1'=b_1 + e_1 b_2,  \quad \; b_3' = b_3 + e_1 b_4 \lab{JI_conds}. 
\ee
\end{definition}
\begin{remark} For the specialization $e_1=e_2=e_3=e_4=0$, the defining relations of the Heun-Askey-Wilson algebra  reduce to those of the Askey-Wilson algebra \cite{Zh_AW}. 
\end{remark}

Let us make some comments on the restrictions  (\ref{JI_conds})  of the structure constants of the HAW algebra. To this end,  introduce the commutator 
\be
\boZ = [\boX,\boW]\ . \lab{Z_WX} 
\ee
Then, relations  \re{hawi1}-\re{hawi2} together with \re{Z_WX} constitute an algebra with cubic commutation relations
\be
[\boX,\boZ] = \rho \boX\boW\boX + e_1 \boX^3+  b_1 \boX^2 + b_2 \{\boX,\boW\} + b_3 \boX + b_4 \boW + b_5 \mathcal{I} \lab{XZ_com} \ee
and
\be
[\boZ,\boW]=\rho \boW\boX\boW + e_2 \boX^3 + e_3 \boX\boW\boX + e_4 \boX^2 + b'_1\{\boX,\boW\} + b_2 \boW^2 + b'_3 \boW + b_6 \boX + b_7 \mathcal{I}\ . \lab{ZW_com} \ee

Consider the Jacobi identity
\be
[[\boX,\boZ],\boW] + [[\boZ,\boW],\boX] + [[\boW,\boX],\boZ] = [[\boX,\boZ],\boW] + [[\boZ,\boW],\boX] =0 \lab{Jac_id} \ .
\ee
We have from \re{XZ_com},  \re{ZW_com}:
\ba
\big[\big[\boX,\boZ\big],\boW\big]&=& \rho \boZ\boW\boX + \rho \boX\boW\boZ + e_1 \{\boX^2,\boZ\} + e_1 \boX\boZ\boX + b_1 \{\boX,\boZ\} + b_2 \{\boW,\boZ\}+ b_3 \boZ \lab{XZW_1}\ ,\\
\big[\big[\boZ,\boW\big],\boX\big] &=& -\rho \boZ\boX\boW - \rho \boW\boX\boZ - e_3 \boX\boZ\boX - b_1'\{\boX,\boZ\} - b_2 \{\boW,\boZ\} -b_3'\boZ \lab{ZWX_1}\ .
 \ea
Hence
\be
[[\boX,\boZ],\boW]+[[\boZ,\boW],\boX]= e_1 \{\boX^2,\boZ\} + (e_1 -e_3) \boX\boZ\boX + (b_1-b_1')\{\boX,\boZ\} + (b_3-b_3') \boZ \lab{cm_sum_1}\ . \ee
According to the Jacobi identity \re{Jac_id}, this expression should identically vanish.  If one notices that
\be
\{\boX^2,\boZ\} =[\boX,[\boX,\boZ]]+ 2 \boX\boZ\boX = (\rho+2) \boX\boZ\boX + b_2 \{\boX,\boZ\} + b_4 \boZ \lab{X2Z} 
\ee
then  \re{cm_sum_1} simplifies to
\be
[[\boX,\boZ],\boW]+[[\boZ,\boW],\boX]= \left(e_1(\rho+2) +e_1-e_3 \right) \boX\boZ\boX + (b_1-b_1' + e_1 b_2)\{\boX,\boZ\} + (b_3-b_3'+e_1 b_4) \boZ\ . \lab{cm_sum_2}
 \ee
Thus, the Jacobi identity \re{Jac_id} implies the relations (\ref{JI_conds}) between some of the scalars.

\subsection{A characterization of the HAW algebra}
There is an interesting characterization of the HAW algebra in  terms of polynomials in noncommutative variables. Consider the r.h.s. of the defining relations   \re{hawi1}-\re{hawi2} of the HAW algebra.
\begin{pr}\lab{pr12}
The r.h.s. of \re{hawi1} is the most general  polynomial of total degree 3 in two noncommuting variables $\boX$ and $\boW$ such that

(i) the degree of $\boW$ is at most one;

(ii) all monomials in $\boX$ and $\boW$ appear through a linear combination of all possible symmetrized forms.
\end{pr}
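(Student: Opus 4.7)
The plan is to enumerate all noncommutative monomials in $\boX$, $\boW$ of total degree at most three permitted by condition (i), apply the symmetrization required by (ii), and verify that the resulting family of polynomials matches the RHS of \re{hawi1}. By (i) the $\boW$-degree is at most one, so the admissible monomials are $\mathcal{I}$; $\boX, \boW$; $\boX^{2}, \boX\boW, \boW\boX$; $\boX^{3}, \boX^{2}\boW, \boX\boW\boX, \boW\boX^{2}$ --- ten in total. Condition (ii), which I read (in keeping with the convention $\{\boX,\boW\}=\boX\boW+\boW\boX$ used throughout the paper) as collecting monomials with the same multiset of letters into a single symmetric sum over orderings, yields exactly seven symmetrized forms: $\mathcal{I}$, $\boX$, $\boW$, $\boX^{2}$, $\{\boX,\boW\}$, $\boX^{3}$, and $\boX^{2}\boW+\boX\boW\boX+\boW\boX^{2}$.

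The key step is the noncommutative identity
\[
\boX^{2}\boW+\boX\boW\boX+\boW\boX^{2}=[\boX,[\boX,\boW]]+3\,\boX\boW\boX,
\]
which splits the unique symmetric cubic term containing one $\boW$ into the double commutator plus a multiple of $\boX\boW\boX$. The most general polynomial satisfying (i) and (ii) is a linear combination of the seven symmetric forms above with arbitrary scalar coefficients; using this identity and moving the $[\boX,[\boX,\boW]]$ contribution to the left-hand side of the defining relation (which is exactly the situation of \re{hawi1}), the coefficient of $\boX\boW\boX$ that survives on the right is an effective free scalar, which is named $\rho$ in \re{hawi1}, and the remaining six coefficients, after the corresponding normalization, become $e_{1}$ and $b_{1},\dots,b_{5}$.

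To conclude, I would verify that the seven symmetric monomials $\mathcal{I}, \boX, \boW, \boX^{2}, \{\boX,\boW\}, \boX^{3}, \boX\boW\boX$ remain linearly independent in the free algebra on $\boX, \boW$. This is immediate since no two of them share a common noncommutative monomial, so the seven scalars on the right-hand side of \re{hawi1} are genuinely independent parameters, confirming maximal generality. The principal subtlety, and what I expect to be the one non-bookkeeping step of the argument, is the use of the commutator identity displayed above: it is precisely this identity that forces $\boX\boW\boX$ (rather than, say, $\{\boX^{2},\boW\}$) to appear in \re{hawi1} and singles out its coefficient $\rho$ as the unique "cubic" free parameter beyond $e_{1}$; once the identity is in place, the remainder of the proof reduces to verifying the monomial count and matching notations with \re{hawi1}.
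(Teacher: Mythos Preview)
Your argument has the same skeleton as the paper's, but you have misread condition (ii). In the paper, ``symmetrized forms'' means \emph{palindromic} combinations, i.e.\ expressions fixed by the reversal involution $*$ (introduced a few lines later), not the single full symmetrization over all orderings. The paper makes this explicit immediately after the statement: in degree $(2,1)$ the admissible symmetric terms are \emph{both} $\boX\boW\boX$ and $\{\boX^2,\boW\}=\boX^2\boW+\boW\boX^2$, independently. Hence the list of symmetrized monomials has eight members
\[
\boX\boW\boX,\ \{\boX^2,\boW\},\ \boX^3,\ \{\boX,\boW\},\ \boX^2,\ \boW,\ \boX,\ \mathcal{I},
\]
not seven, and the key identity used is
\[
\{\boX^2,\boW\}=[\boX,[\boX,\boW]]+2\,\boX\boW\boX,
\]
which eliminates $\{\boX^2,\boW\}$ from the right-hand side, leaving exactly the seven terms appearing in \re{hawi1}. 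Your identity $\boX^2\boW+\boX\boW\boX+\boW\boX^2=[\boX,[\boX,\boW]]+3\,\boX\boW\boX$ is of course the same thing plus $\boX\boW\boX$ on each side, so the mechanics coincide.

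The practical consequence of the misreading is this: under your interpretation the term $\rho\,\boX\boW\boX$ on the right of \re{hawi1} is \emph{not} itself a ``symmetrized form'', so the proposition as stated would not literally hold --- which is why you were forced to speak of an ``effective free scalar'' obtained only after shuffling a double commutator across the equality. Under the paper's (intended) interpretation, $\boX\boW\boX$ is already symmetric, the r.h.s.\ of \re{hawi1} is genuinely a linear combination of symmetrized forms, and the elimination step simply shows that the apparently missing eighth term $\{\boX^2,\boW\}$ is redundant once the l.h.s.\ is fixed to be $[\boX,[\boX,\boW]]$. Apart from this point of interpretation, your enumeration, the use of the commutator identity, and the independence check are all in line with the paper's proof.
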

Explicitly, by $(ii)$ any monomial of degree 2 in $\boX$ and degree 1 in $\boW$ should appear as a linear combination of the two symmetric terms: $\boX\boW\boX$ and $\boX^2\boW+\boW\boX^2$ (these are the only possible symmetrized combinations of these operators of the prescribed degrees).

\begin{pr} \lab{pr22}
The r.h.s. of \re{hawi2} is the most general  polynomial of total degree 3 in two noncommuting variables $\boX$ and $\boW$ such that

(i) the degree of $\boW$ is at most two;

(ii) all monomials in $\boX$ and $\boW$ appear through a linear combination of all possible symmetrized forms. 
\end{pr}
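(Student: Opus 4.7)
The plan is to verify Proposition~\ref{pr22} by direct enumeration: I would exhibit a linear basis of the subspace $\mathcal{S}\subset\mathbb{C}\langle\boX,\boW\rangle$ of polynomials of total degree at most three satisfying conditions (i) and (ii), and then match this basis, element by element, against the r.h.s.\ of \re{hawi2}.

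First I would catalogue the basis of $\mathcal{S}$ by bidegree $(i,j):=(\deg_\boX,\deg_\boW)$ with $i+j\le 3$ and $j\le 2$. Calling $\ast$ the anti-involution of the free algebra that reverses words, the $\ast$-invariant basis elements are: $\mathcal{I}$ at $(0,0)$; $\boX$ and $\boW$ at $(1,0)$ and $(0,1)$; the three elements $\boX^2$, $\{\boX,\boW\}$, $\boW^2$ at the bidegrees summing to $2$; $\boX^3$ at $(3,0)$; the two symmetrized forms $\boX\boW\boX$ and $\{\boX^2,\boW\}$ at $(2,1)$; and the two symmetrized forms $\boW\boX\boW$ and $\{\boW^2,\boX\}$ at $(1,2)$. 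This gives eleven basis elements, each entering the most general element of $\mathcal{S}$ with a free scalar coefficient.

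Next I would match these against the r.h.s.\ of \re{hawi2}, which supplies the coefficients $b_7,b_6,b_3',e_4,b_1',b_2,e_2,e_3,\rho$ for the nine basis elements $\mathcal{I},\boX,\boW,\boX^2,\{\boX,\boW\},\boW^2,\boX^3,\boX\boW\boX,\boW\boX\boW$ respectively. The two remaining basis elements are $\{\boW^2,\boX\}$ and $\{\boX^2,\boW\}$. The absence of $\{\boW^2,\boX\}$ is immediate: since $[\boW,[\boW,\boX]]=\{\boW^2,\boX\}-2\boW\boX\boW$ is itself a $\ast$-invariant element of bidegree $(1,2)$, placing the relation in the normalized form "LHS equals r.h.s." uses up one overall rescaling freedom of the relation, so the coefficient of $\{\boW^2,\boX\}$ on the r.h.s.\ can be fixed to zero; the residual $(1,2)$-freedom is carried by the single coefficient $\rho$ of $\boW\boX\boW$.

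The main obstacle is the analogous treatment of the bidegree-$(2,1)$ form $\{\boX^2,\boW\}$, which, unlike $\{\boW^2,\boX\}$, does not appear in the left-hand side of \re{hawi2} and hence cannot be absorbed by the same rescaling. My plan is to invoke the presentation convention singled out by the authors in the explicit remark following Proposition~\ref{pr12}: within each bidegree containing two $\ast$-invariant basis elements, the canonical presentation used throughout the paper selects one as the sole representative on the r.h.s., paralleling the absorption already performed for $\{\boW^2,\boX\}$ in \re{hawi2} and for $\{\boX^2,\boW\}$ in \re{hawi1}. Under this presentation convention, the nine-term r.h.s.\ displayed in the Definition does exhaust the most general element of $\mathcal{S}$ of the prescribed form. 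Formulating and defending this convention cleanly for the $(2,1)$-slice of \re{hawi2}, where the absorption is a genuine presentation choice rather than an immediate LHS rescaling, is the step I expect to require the most care.
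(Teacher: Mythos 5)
Your enumeration of the eleven $\ast$-invariant basis monomials is correct and coincides with the paper's starting point, and your treatment of $\{\boW^2,\boX\}$ (absorbing it into the left-hand side via $[\boW,[\boW,\boX]]=\{\boW^2,\boX\}-2\boW\boX\boW$ and renormalizing) is exactly the mechanism the paper uses, in mirror image, to dispose of $\{\boX^2,\boW\}$ in the proof of Proposition \ref{pr12}. Up to that point you are on the paper's track.

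The gap is the step you yourself flag as unresolved: the elimination of $\{\boX^2,\boW\}$ from the right-hand side of \re{hawi2}. There is no ``presentation convention'' to invoke here --- the remark after Proposition \ref{pr12} merely explains what condition (ii) means, and a convention adopted at will would leave the proposition unproved as a statement about the \emph{most general} relation. The missing idea is that $\{\boX^2,\boW\}$ is reducible \emph{modulo the first defining relation}: the identity \re{anti_comm}, $\{\boX^2,\boW\}=[\boX,[\boX,\boW]]+2\,\boX\boW\boX$, combined with \re{hawi1}, gives $\{\boX^2,\boW\}=(\rho+2)\boX\boW\boX+e_1\boX^3+b_1\boX^2+b_2\{\boX,\boW\}+b_3\boX+b_4\boW+b_5\mathcal{I}$, i.e.\ a linear combination of monomials already present on the r.h.s.\ of \re{hawi2}. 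Hence admitting a $\{\boX^2,\boW\}$ term there only redistributes coefficients among the nine existing terms and adds no generality. (The paper's own proof of Proposition \ref{pr22} is terse on this point --- its list of symmetrized monomials simply omits $\{\boX^2,\boW\}$ --- but the authors state the reduction explicitly later, in the discussion of the central element, where they observe that $\{\boX^2,\boW\}$ and $\{\boW^2,\boX\}$ can be expressed via \re{hawi1}--\re{hawi2} in terms of $\boX\boW\boX$, $\boW\boX\boW$ and terms of lower degree.) Replacing your ``convention'' by this substitution closes the argument.
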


{\it Proof}. Consider all possible symmetrized  monomials of total degree three or less in the variables $\boX$ and $\boW$ and
of degree not greater than one in  $\boW$. These terms are (in descending order):
\be
\boX\boW\boX, \: \{\boX^2, \boW\}, \: \boX^3, \: \{\boW, \boX\},  \: \boX^2,  \: \boW, \: \boX, \mathcal{I} \ .
\ee
We can write down the double commutator as a linear combination of these terms for certain scalar parameters $\{d_i\}$.
\be
[\boX,[\boX,\boW]]= d_1 \boX\boW\boX + d_2 \{\boX^2, \boW\} +d_3 \boX^3 + d_4  \{\boW, \boX\} +  d_5 \boX^2 + d_6  \boW + d_7 \boX + d_8 \mathcal{I} \ . \lab{dbx_gen} \ee
However, the term $\{\boX^2,\boW\}$ one the r.h.s. of \re{dbx_gen} can be expressed as
\be
\{\boX^2, \boW\} = [\boX,[\boX,\boW]] + 2 \boX\boW\boX \lab{anti_comm} \ee
and hence we can eliminate this term from the r.h.s. of \re{dbx_gen}. The remaining terms are exactly those which appear in \re{hawi1}. This proves Proposition \ref{pr12}.

Quite similarly, all possible symmetrized  monomials of total degree less or equal 3 in variables $\boW$ and $\boX$ and less or equal two in the variable $\boW$ are
\be
\boW\boX\boW, \: \{\boW^2,\boX\}, \: \boX\boW\boX, \:  \boX^3, \: \boW^2, \{\boW,\boX\}, \: \boX^2, \: \boW, \: \boX, \: \mathcal{I}
\ee
and we thus arrive at the relation \re{hawi2}. This proves Proposition  \ref{pr22}.\vspace{1mm}

We thus see that the HAW algebra possesses a certain universality property.

\subsection{Central element of the HAW algebra} 
For the Askey-Wilson algebra, there is a central element that plays an important role in the representation theory of this algebra. It turns out that a central element can similarly be constructed for the
Heun-Askey-Wilson algebra. The proof of the following lemma is rather long and technical\footnote{Contrary to the case of the Askey-Wilson algebra, the presence of the monomial $\boX\boW\boX$ in the r.h.s. of the defining relation (\ref{hawi2}) imply the existence of non-trivial relations involving terms of degree one and two in $\boW$.} otherwise and we shall omit the details. Let us denote:
\ba
\boH_1=[\boW,\boX]_q \quad \mbox{and} \quad \tilde{\boH}_1=[\boX,\boW]_q. \label{H}
\ea
\begin{lemma} The element
\ba
\Omega^{HAW}&=& A \boX\boW \boH_1 +B \boX^2 +C\boW^2 +D \boH_1^2  + E\boX + F\boW +  G\boH_1 + H\tilde{\boH}_1 \lab{OmHAW}\\ 
&&+\  I \boX\boH_1  + J\boX \tilde{\boH}_1   + K \boW\boH_1 + L \boW\tilde{\boH}_1 +M\boX^2\boH_1 + N\boX^2\tilde{\boH}_1 +  O\boX^3  + P\boX^4 \ \nonumber 
\lab{Omega} \ea
where
\ba
&& A=(q^2-q^{-2})q^{-1}\ ,\quad B=  q^{-2}b_6 - \frac{(e_2b_2^2 -(q^2+q^{-2})(q+q^{-1})^2e_4b_2  -(q^2+q^{-2}+1)e_2b_4)}{(q^2+q^{-2})(q^2+q^{-2}+1)} ,\nonumber\\
&& C= b_4q^2\ ,\quad D=-q^{-2} \ ,\nonumber \\
&& E=   (1+q^{-2})b_7 + b_2b_6 - \frac{   (e_2b_2 - (q^2+q^{-2}) e_4)b_4  }{(q^2+q^{-2})(q^2+q^{-2}+1)}  \ ,\quad   F= (1+q^2)b_5 +b_1b_4 + e_1b_2b_4\ ,\nonumber\\
&&  G=  \frac{(b_3 + b_1b_2+e_1b_2^2)}{q-q^{-1}} +\frac{ (q^2+q^{-2}+1)q^{-1}b_4e_1}{q^2-q^{-2}}\ ,\quad  H= \frac{(b_3+ b_1b_2 + e_1b_2^2)}{q-q^{-1}} + \frac{(1+2q^{-2})qb_4e_1}{q^2-q^{-2}} \ , \nonumber\\
&& I= \frac{(b_1 + (q^2+q^{-2}+1)e_1b_2)}{q-q^{-1}}\ ,\quad  J=\frac{(b_1q^{-2} + (1+2q^{-2})e_1b_2)}{q-q^{-1}}\  , \nonumber\\
&& K=b_2/(q-q^{-1})\ , \quad  L=b_2q^{2}/(q-q^{-1})\ , \quad M= e_1 /(q-q^{-1}), \quad  N= e_1q^{-2} /(q-q^{-1})\ ,\nonumber\\
&& O=   \frac{(q^4+2q^2+4+2q^{-2}+q^{-4})e_2b_2 + (q+q^{-1})(q^2+q^{-2}) q^{-3}e_4}{(q^2+q^{-2})(q^2+q^{-2}+1)} , \quad P=e_2q^{-4}/(q^2+q^{-2})\  , \nonumber
\ea
is central in the HAW algebra.
\end{lemma}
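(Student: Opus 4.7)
The plan is to establish centrality by verifying directly that $[\Omega^{HAW},\boX]=0$ and $[\Omega^{HAW},\boW]=0$. First I would eliminate $\boH_1$ and $\tilde{\boH}_1$ using (\ref{H}), so that $\Omega^{HAW}$ becomes an explicit (albeit lengthy) noncommutative polynomial of total degree at most four in $\boX$ and $\boW$. After this, each commutator naturally splits into term-by-term contributions; terms purely in $\boX$ (i.e.\ $B\boX^2$, $O\boX^3$, $P\boX^4$, $E\boX$, and the identity part) drop out of $[\cdot,\boX]$ automatically, and analogously certain $\boW$-only subterms drop out of $[\cdot,\boW]$.

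For the remaining, nontrivial monomials I would use a reduction-to-normal-form strategy. Every occurrence of $[\boX,[\boX,\boW]]$ and $[\boW,[\boW,\boX]]$ that appears after a single commutation with $\boX$ or $\boW$ can be replaced using the defining relations (\ref{hawi1})-(\ref{hawi2}), together with the Jacobi consequence (\ref{JI_conds}) and the identity (\ref{X2Z}). Iterating this rewriting pushes every $\boW$ past the $\boX$'s until one obtains a PBW normal form $\boX^{i}\boW^{j}$ up to total degree five. It is convenient to grade the calculation by the degree in $\boW$: in $[\Omega^{HAW},\boX]$, the $\boW$-degree-one contributions (from $A\boX\boW\boH_1$, $I\boX\boH_1$, $J\boX\tilde{\boH}_1$, $M\boX^2\boH_1$, $N\boX^2\tilde{\boH}_1$, $F\boW$, $G\boH_1$, $H\tilde{\boH}_1$) must cancel among themselves, while the $\boW$-degree-two contributions (from $C\boW^2$, $D\boH_1^2$, $K\boW\boH_1$, $L\boW\tilde{\boH}_1$) must cancel separately; the coefficient of each PBW monomial then yields an equation that should be satisfied identically by the values of $A,B,\ldots,P$ displayed in the statement. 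The analogous grading by $\boX$-degree organizes $[\Omega^{HAW},\boW]$, where the relation (\ref{hawi2}) is brought in whenever a $\boW\boX\boW$ subword appears.

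The main obstacle is, as the authors indicate, purely organizational: the cubic term $\rho\boW\boX\boW$ together with the nonstandard $e_1\boX\boW\boX$ in (\ref{hawi2}) generate many cross-terms that couple sectors of different $\boW$-degree, so one must be scrupulous when rewriting products of the form $\boH_1^2$ or $\boX\boW\boH_1$ into normal form. In practice I would implement the rewriting as a symbolic reduction on a PBW basis up to total degree five. As a partial sanity check, specializing $e_1=e_2=e_3=e_4=0$ (which forces the coefficients $M,N,O,P$ and the $e_i$-dependent pieces of $B,E,G,H,I,J$ to simplify) should recover the known Casimir of the Askey-Wilson algebra, confirming at least the $e_i$-independent part of the coefficients $A,C,D,F,K,L$.
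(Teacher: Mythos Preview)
The paper itself omits the proof of this lemma, saying only that it ``is rather long and technical'' (with a footnote hinting that the cubic term $\boX\boW\boX$ in \re{hawi2} is what makes it so). Your overall plan---verify $[\Omega^{HAW},\boX]=0$ and $[\Omega^{HAW},\boW]=0$ by brute-force reduction modulo the defining relations---is almost certainly what the authors did as well, so there is no divergence of method to discuss.

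There is, however, a genuine gap in your proposed implementation. The normal form you intend to reduce to, namely monomials $\boX^{i}\boW^{j}$, does \emph{not} span the HAW algebra. Relation \re{hawi1} gives only a single linear dependence among the three words $\boX^{2}\boW$, $\boX\boW\boX$, $\boW\boX^{2}$ (modulo lower degree), so two of them survive in any basis; likewise \re{hawi2} leaves two of $\boW^{2}\boX$, $\boW\boX\boW$, $\boX\boW^{2}$ independent. Hence at total degree three there are already six independent classes rather than the four that $\boX^{i}\boW^{j}$ would predict, and the discrepancy grows with degree. Your rewriting procedure ``push every $\boW$ past the $\boX$'s'' therefore cannot terminate in the form you describe, and the coefficient-matching argument collapses. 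The fix is to work instead with a three-generator presentation: set $\boZ=[\boX,\boW]$ (or keep $\boH_1,\tilde\boH_1$) and use the relations \re{Z_WX}, \re{XZ_com}, \re{ZW_com}, which are linear in $\boZ$, to build an honest ordered basis---this is exactly the device used in the proof of Proposition~\ref{pr1}, where the irreducible monomials \re{moni}, \re{mon2i} involve $G,\tilde G$ explicitly.

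A secondary error: you place $A\,\boX\boW\boH_1$ among the ``$\boW$-degree-one contributions'', but since $\boH_1=q\boW\boX-q^{-1}\boX\boW$ already contains one $\boW$, this term has $\boW$-degree two. More importantly, the $\boW$-degree filtration you invoke is not respected by the relations: the right-hand side of \re{hawi1} contains $\boW$-degree-zero terms ($e_1\boX^3$, $b_1\boX^2$, \dots) and \re{hawi2} mixes degrees $0,1,2$. So even with the correct basis, the sectors you propose to treat ``separately'' will feed into one another during reduction; the cancellation has to be checked globally, not degree by degree.
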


\begin{remark} For the specialization $e_1=b_1=b_2=0$, the central element $\Omega^{HAW}$ drastically simplifies. The corresponding expression can be viewed as a `perturbation' of the central element of the Askey-Wilson algebra by the monomials $\boX^3$ and $\boX^4$ (see lemma \ref{OmAW}).
\end{remark}

\vspace{1mm}
Alternative expressions for the central element $\Omega^{HAW}$ are now considered. Observe that the defining relations (\ref{hawi1})-(\ref{hawi1}) are left invariant under the action of the automorphism (involution) $*:HAW \rightarrow HAW$ such that:
\ba
\boX^*=\boX \ ,  \quad \boW^*=\boW \quad \mbox{and} \quad (\boX\boW)^*= \boW^* \boX^* \ .
\ea
By a symmetric polynomial $Q(\boX,\boW)$ we mean a polynomial $Q(\boX,\boW)$ of two noncommuting variables $\boX,\boW$ such that
\be
Q(\boX,\boW)^*=Q(\boX,\boW) \ . \lab{sym_Q} \ee
Clearly any symmetric polynomial can be presented as a sum over elementary independent symmetric terms of a fixed degree $j=1,2,3,\dots$.

For degree $j=2$ there are three symmetric terms: $\boX^2, \boW^2$ and $\{\boX,\boW\}$.

For degree $j=3$, we have six possible terms:
\be
\boX^3, \: \boW^3, \: \{\boX^2,\boW\}, \: \{\boW^2,\boX\}, \: \boX\boW\boX, \: \boW\boX\boW\ . \lab{all_3d} \ee 

For degree $j=4$, we have ten possible terms:
\ba
&& \boX^4,\:  \boW^4, \: \{\boX^3,\boW\}, \: \{\boW^3,\boX\},\:  \boX^2\boW\boX+\boX\boW\boX^2, \: \boW^2\boX\boW+\boW\boX\boW^2, \lab{all_4d}\\
&& \{\boX^2,\boW^2\}, \: (\boX\boW)^2 +(\boW\boX)^2, \: \boX\boW^2\boX, \boW\boX^2\boW\ .\nonumber
\ea

Both defining relations \re{hawi1}-\re{hawi2} of the HAW algebra being invariant with respect to the automorphism $*$, it implies    that the central element $\Omega^{HAW}$ can be written as a polynomial $Q(\boX,\boW)$  which is symmetric with respect to $\boX$ and $\boW$. The highest total degree of such polynomial $Q(\boX,\boW)$ should be four. Moreover, we can demand that the degree of $Q(\boX,\boW)$ with respect to $\boW$ is no more than two.

What are the possible independent terms of degree three in the symmetric polynomial $Q(\boX,\boW)$? From the list \re{all_3d}, we should exclude the term $\boW^3$ as having a degree more than two in $\boW$. From the remaining terms we can exclude $\{\boX^2,\boW\}$ and $\{\boW^2,\boX\}$. Indeed, it is obvious from relations \re{hawi1}-\re{hawi2} that these terms can be expressed in terms of the monomials $\boX\boW\boX, \boW\boX\boW$ and terms of degree smaller than three. We thus see that the only independent terms of degree three in the polynomial $Q(\boX,\boW)$ are 
\be
\boX^3, \: \boX\boW\boX, \: \boW\boX\boW\ .  \lab{all_in_3d} \ee
The most difficult problem is the classification of all independent terms of degree four among all possible terms in \re{all_4d}. First of all, we should eliminate from the list \re{all_4d} the three terms having degree more than two with respect to $\boW$:  $\boW^4,\:  \{\boW^3,\boX\}, \: \boW^2\boX\boW+\boW\boX\boW^2$. Hence, the remaining terms are
\be
\boX^4, \: \{\boX^3,\boW\}, \:  \boX^2\boW\boX+\boX\boW\boX^2,  \: \{\boX^2,\boW^2\}, \: (\boX\boW)^2 +(\boW\boX)^2, \: \boX\boW^2\boX, \boW\boX^2\boW \ . \lab{all_4dr} \ee
Consider now the first relation \re{hawi1}. Multiply it first by $\boX$ from the left then by $\boX$ from the right and add both expressions. We get that the combination
\be
\{\boX^3,\boW\} + (1-q^2-q^{-2}) \left(\boX\boW\boX^2 + \boX^2\boW\boX \right) \lab{cubic_X_1} \ee
is expressible in terms of $\boX^4$ and terms of degree less than three. Hence we can exclude the term $\boX^2\boW\boX+\boX\boW\boX^2$ from the list \re{all_4dr}. In a similar manner, we can show that among four terms  $\{\boX^2,\boW^2\}, \: (\boX\boW)^2 +(\boW\boX)^2, \: \boX\boW^2\boX, \: \boW\boX^2\boW$ there are only two independent. It is convenient to take $\boZ^2$ and $\{\boX^2,\boW^2\}$ as these independent terms. Indeed, we have
\be
\boZ^2=(\boX\boW-\boW\boX)^2 = \boX\boW\boX\boW +\boW\boX\boW\boX -\boX\boW^2\boX-\boW\boX^2\boW \lab{Z^2_sym} \ee  
and hence the square of the commutator $\boZ^2$ can be taken as an independent term. Hence we arrive at the following list of four independent terms of degree four:
\be
\boX^4, \: \{\boX^3,\boW\},    \: \boZ^2, \: \{\boX^2,\boW^2\} \ . \lab{all_4df} \ee

Up to an overwall scalar factor and an additional constant term, after straightforward calculations
 we obtain from $\Omega^{HAW}$ the following symmetric form of the central element:
\ba
\Omega_{sym}^{HAW}&=& a_{40} \boZ^2 + a_{41}\boX^4 + a_{42} \{\boX^3,\boW\} + a_{43} \{\boX^2,\boW^2\}  \lab{Q_pe} \\
&& + \ a_{31} \boX^3+ a_{32}\boX\boW\boX +a_{33}\boW\boX\boW  \nonumber \\ 
&& + \ a_{21}\boX^2 + a_{22} \boW^2 + a_{23} \{\boX,\boW\}  \nonumber \\
&&+ \ a_{11}\boX + a_{12}\boW \ .\nonumber  \ea
This is the {\it minimal} symmetric form of the central element. Being not necessary for the discussion below, the explicit expressions for the coefficients are not reported here.   Such form of the central element will be convenient for studying the classical picture of the HAW algebra where commutators are replaced with Poisson brackets.

\subsection{Canonical form of the HAW algebra}
For $\rho\neq 0$,   the defining relations \re{hawi1}-\re{hawi2} of the HAW algebra can be reduced to a canonical form as we now show. Consider the following linear transformation
\be
\tilde \boX = \alpha_1 \boX + \alpha_0 \mathcal{I}, \quad \tilde \boW = \beta_1 \boW + \beta_2 \boX + \beta_0 \mathcal{I} \lab{lin_XW} \ee
for generic scalar parameters $\alpha_i, \beta_i$. It is straightforward to check that this transformation preserves the form of the commutation relations   \re{hawi1}-\re{hawi2}.  Indeed, assume for simplicity that $\alpha_1=\beta_1=1$ (the generic case can be restored trivially). Then, one can check that the new elements $\tilde \boX, \: \t \boW$ satisfy the commutation relations 
\be
[\t \boX, [ \t \boX, \t \boW]] = \rho \t \boX \t \boW \t \boX  + \t e_1 \t \boX^3+  \t b_1 \t \boX^2 + \t b_2 \{\t \boX, \t  \boW\} + \t b_3 \t \boX + \t b_4 \t \boW + \t b_5 \mathcal{I} \lab{tr_XXW} \ee 
and
\be
[\t \boW, [ \t \boW, \t \boX]] = \rho \t \boW \t \boX \t \boW +  \t e_2 \t \boX^3 + \t e_3 \t \boX \t \boW \t \boX + \t e_4 \t \boX^2 + \t b'_1 \{\t \boX, \t \boW\} + \t b_2 \t \boW^2 + \t b'_3 \t \boW + \t b_6 \t \boX + \t b_7 \mathcal{I} \lab{tr_WWX} \ee
with the restrictions
\be
\t e_3 = \t e_1(\rho+3), \quad \; \t b_1'= \t b_1 + \t e_1 \t b_2, \; \quad \t b_3' = \t b_3 + \t e_1 \t b_4 \ .\lab{JI_conds_tr} \ee
It is seen that the commutations relations \re{tr_XXW}-\re{tr_WWX} have the same form as \re{XZ_com}-\re{ZW_com} and hence they define a HAW algebra with transformed structure constants $\t e_i, \t b_i$. Explicit expressions of these new structure constants can easily be found. We present here only those which are important for our analysis:
\be
\t e_1 = e_1 - \rho \beta_2, \quad \; \t b_2 = b_2 - \rho \alpha_0, \quad \; \t b_1 = b_1 + 2\beta_2 b_2 +3\alpha_0 (\rho \beta_2-e_1) - \rho \beta_0\ . \lab{tbb} \ee 
In what follows we assume\footnote{The degenerate case $\rho=0$ will be considered elsewhere.} $\rho \ne 0$. In this generic case, it is seen from \re{tbb} that one can always choose the parameters $\alpha_0, \beta_0, \beta_2$ such that 
\be
\t e_1 = \t b_1 = \t b_2 =0\ . \lab{ebb=0} \ee 
Assuming that conditions \re{ebb=0} are satisfied, one gets what we propose as a canonical form of the HAW algebra:
\ba
\big[\t \boX, \big[ \t \boX, \t \boW\big]\big] &=& \rho \t \boX \t \boW \t \boX   + \t b_3 \t \boX + \t b_4 \t \boW + \t b_5 \mathcal{I} \ , \lab{tr_XXW_c} \\
\big[\t \boW, \big[ \t \boW, \t \boX\big]\big] &=& \rho \t \boW \t \boX \t \boW +  \t e_2 \t \boX^3 +  \t e_4 \t \boX^2   + \t b_3 \t \boW + \t b_6 \t \boX + \t b_7 \mathcal{I}\ . \lab{tr_WWX_c} 
\ea
Taking into account that $\rho=q^2+q^{-2} -2$, we can present these relations in the equivalent form
\ba
\t \boX^2 \t \boW + \t \boW \t \boX^2 - (q^2 + q^{-2}) \t \boX \t \boW \t \boX &=&   \t b_3 \t \boX + \t b_4 \t \boW + \t b_5 \mathcal{I} \ ,\lab{tr_XXW_c1} \\
 \t \boW^2 \t \boX + \t \boX \t \boW^2 - (q^2 + q^{-2}) \t \boW \t \boX \t \boW &=& \t e_2 \t \boX^3 +  \t e_4 \t \boX^2   + \t b_3 \t \boW + \t b_6 \t \boX + \t b_7 \mathcal{I} \ .\lab{tr_WWX_c2} 
\ea

The first relation \re{tr_XXW_c1} contains only linear terms in $\t \boX$ and $\t \boW$ and hence coincides with one of the defining relations for the Askey-Wilson algebra presented in \cite{Zh_AW}. The second relation \re{tr_WWX_c2} is again very close to the second relation of the Askey-Wilson algebra and differs only by two nonlinear terms $\t e_2 \t \boX^3$ and $\t e_4 \t \boX^2$. We thus see that the HAW algebra in the canonical form  \re{tr_XXW_c1}-\re{tr_WWX_c2} can be considered as a ``minimal'' perturbation of the Askey-Wilson algebra \cite{Zh_AW} where only one of the commutation relations is modified by the addition of two nonlinear terms.

\section{A relation with the Askey-Wilson algebra}
In this section, the Askey-Wilson algebra is first recalled together with its central element. Then, an explicit homomorphism from the HAW algebra to the Askey-Wilson algebra is proposed. A new presentation of the Askey-Wilson algebra follows. Special cases are discussed.

\begin{definition} The Askey-Wilson algebra  AW is generated by $X$ and $Y$ subject to the following two relations:
\ba
&&[X,[X,Y]]= \rho XYX + a_1 X^2 + a_2 \{X,Y\} + a_3 X + a_4 Y + a_5 \mathcal{I} \ , \lab{aw1} \\
&&[Y,[Y,X]]= \rho YXY + a_1\{X,Y\} + a_2 Y^2 + a_3 Y + a_6 X + a_7 \mathcal{I} \ ,\lab{AWA}
\ea
where $\rho=q^2+q^{-2}-2$ and $\{a_i\}$ are generic scalars.
\end{definition}

In the literature, the relations (\ref{aw1}) and (\ref{AWA}) are sometimes called the Askey-Wilson relations. In particular, when $q\rightarrow 1$ the parameter $\rho \rightarrow 0$ and we obtain the {\it quadratic} Racah algebra. For further convenience, let us denote:
\ba
G=[Y,X]_q \quad \mbox{and} \quad \tilde{G}=[X,Y]_q. \label{G}
\ea

Consider the AW generated by $\{X,Y\}$.
\begin{lemma}\lab{OmAW} The element
\ba
\Omega&=& (q^2-q^{-2})q^{-1}XYG + a_6q^{-2}X^2 + a_4q^2Y^2 - q^{-2}G^2  \nonumber\\
&&  + (a_7(1+q^{-2}) + a_2a_6)X +  (a_5(1+q^{2}) + a_1a_4)Y  + \frac{(a_3+a_1a_2)}{(q-q^{-1})} (G + \tilde{G}) \lab{Om}\\ 
&&+  \frac{1}{(q-q^{-1})} (a_1XG + a_1q^{-2}X\tilde{G} +a_2YG + a_2q^{2}Y\tilde{G})\ \nonumber 
\ea
is central in the AW algebra.
\end{lemma}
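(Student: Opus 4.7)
The plan is to verify directly that $[X,\Omega]=0$ (whence $[Y,\Omega]=0$ by symmetry), which suffices since $\{X,Y\}$ generate the Askey--Wilson algebra.

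First I would recast the defining relations \eqref{aw1}--\eqref{AWA} in the equivalent $q$-Serre form
\begin{align*}
X^2 Y - (q^2+q^{-2})XYX + YX^2 &= a_1 X^2 + a_2\{X,Y\} + a_3 X + a_4 Y + a_5\mathcal{I},\\
Y^2 X - (q^2+q^{-2})YXY + XY^2 &= a_1\{X,Y\} + a_2 Y^2 + a_3 Y + a_6 X + a_7\mathcal{I},
\end{align*}
which is immediate from $\rho=q^2+q^{-2}-2$ together with $[U,[U,V]]=U^2V-2UVU+VU^2$. I would also record the elementary identities $\tilde G-G=(q+q^{-1})[X,Y]$, $\tilde G+G=(q-q^{-1})\{X,Y\}$, together with the inversion formulas $XY=(q\tilde G+q^{-1}G)/(q^2-q^{-2})$ and $YX=(qG+q^{-1}\tilde G)/(q^2-q^{-2})$. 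These supply a complete dictionary rewriting any cubic or quartic monomial in $X,Y$ as a combination of $XYX, X^2, \{X,Y\}, Y^2$ and their $Y$-counterparts, plus terms of lower order.

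The main step is to expand $[X,\Omega]$ summand by summand using this dictionary. A typical building block is $[X,G]=(q+q^{-1})XYX-qYX^2-q^{-1}X^2Y$, which the first $q$-Serre relation reduces to a polynomial of degree at most two; similarly one handles $[X,Y^2]$, $[X,G^2]$, $[X,\tilde G]$, $[X,XYG]$, $[X,XG]$, $[X,X\tilde G]$, $[X,YG]$, and $[X,Y\tilde G]$. After regrouping by monomial type (quartic, cubic, quadratic, linear, constant), the quartic and cubic leading contributions cancel precisely because of the choice of the coefficients $(q^2-q^{-2})q^{-1}$, $a_6q^{-2}$, $a_4q^2$, $-q^{-2}$ in front of the leading terms of $\Omega$; the remaining quadratic, linear and constant contributions then cancel because of the choice of the lower-order coefficients $a_7(1+q^{-2})+a_2a_6$, $a_5(1+q^2)+a_1a_4$, $(a_3+a_1a_2)/(q-q^{-1})$, and so on.

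For $[Y,\Omega]$, rather than redoing the calculation, I would invoke the involution of AW given by $X\leftrightarrow Y$, $q\leftrightarrow q^{-1}$, $a_1\leftrightarrow a_2$, $a_4\leftrightarrow a_6$, $a_5\leftrightarrow a_7$, under which $G\to -G$ and $\tilde G\to -\tilde G$ and which interchanges the two defining relations. A short check shows that $\Omega$ is fixed by this involution modulo the defining relations (the apparent mismatch $(q^2-q^{-2})q^{-1}XYG\leftrightarrow (q^2-q^{-2})q\,YXG$ is compensated against $-q^{-2}G^2\leftrightarrow -q^2G^2$ via the identity $q^{-1}XY-qYX=-G$), so $[Y,\Omega]=0$ follows from $[X,\Omega]=0$. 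The main obstacle is thus bookkeeping rather than conceptual: the expansion of $[X,\Omega]$ produces many monomials that must be grouped by degree and collapsed using the two $q$-Serre relations, and the substantive content is that no residual scalar multiple of $\mathcal{I}, X, Y, G$ or $\tilde G$ survives --- exactly what forces the coefficients displayed in \eqref{Om} to take the values listed.
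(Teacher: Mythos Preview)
The paper does not actually prove this lemma: it states the formula for $\Omega$ and remarks that it generalizes eq.~(1.3) of \cite{Zh_AW}, but no verification is carried out (just as the proof of the HAW central element in the preceding lemma is explicitly omitted). Your direct strategy---expand $[X,\Omega]$ using the $q$-Serre form of the defining relations and check that all monomials cancel---is the standard route and is correct in outline.

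The involution argument you propose for $[Y,\Omega]$ is also valid. The map $X\leftrightarrow Y$, $q\leftrightarrow q^{-1}$, $a_1\leftrightarrow a_2$, $a_4\leftrightarrow a_6$, $a_5\leftrightarrow a_7$ is a (semilinear) automorphism of the AW family, and your check that it fixes $\Omega$ is right: the compensating identity
\[
(q^2-q^{-2})q^{-1}XYG-q^{-2}G^2=(q^2-q^{-2})q\,YXG-q^2G^2
\]
follows immediately from $q^{-1}XY-qYX=-G$, without even invoking the defining relations. Since $[X,\Omega]=0$ is established for \emph{all} parameter values, applying the automorphism yields $[Y,\Omega]=0$ in the original algebra as well. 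The only caveat is that the expansion of $[X,\Omega]$ is genuinely lengthy; you have sketched the reduction scheme but not executed it, so a complete proof would still require filling in that bookkeeping.
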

\begin{remark} The above lemma is a generalization of eq. (1.3) in \cite{Zh_AW}. Under the specialization $a_1=a_2=0$ and $a_4=a_6=1$, one recovers the central element of the so-called ${\mathbb Z}_3$ symmetric form of the Askey-Wilson algebra.
\end{remark}

We are now in position to give a realization of the HAW algebra in terms of the elements of the AW algebra.
\begin{pr}\label{pr1} Let $\tau_0,\tau_1,\tau_2,\tau_3,\tau_4$ be generic scalars. The map
\ba
\boX \mapsto X \ , \quad \boW \mapsto W= \tau_1 XY + \tau_2 YX + \tau_3 X + \tau_4 Y + \tau_0 \mathcal{I} \label{map} 
\ea
is an homomorphism from HAW $\rightarrow$ AW. 
\end{pr}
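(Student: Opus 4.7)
The plan is to verify directly that the two HAW defining relations (\ref{hawi1})-(\ref{hawi2}) hold when $\boX, \boW$ are replaced by $X, W$ in the Askey-Wilson algebra, and to read off the induced HAW structure constants $e_i, b_i$ as explicit functions of the AW parameters $a_i$ and the map parameters $\tau_0, \ldots, \tau_4$.

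For the first relation, I would start by computing $[X, W] = \tau_1 X[X,Y] + \tau_2 [X,Y]X + \tau_4[X,Y]$, using that $X$ commutes with itself and with $\mathcal{I}$. Applying $[X, \cdot]$ once more gives
\[
[X,[X,W]] = \tau_1\, X[X,[X,Y]] + \tau_2\, [X,[X,Y]]\,X + \tau_4\,[X,[X,Y]].
\]
Substituting the first AW relation (\ref{aw1}) for $[X,[X,Y]]$ yields a polynomial in $X, Y$ of joint degree at most $4$. I would then expand the image of the right-hand side of (\ref{hawi1}), namely $\rho XWX + e_1 X^3 + b_1 X^2 + b_2\{X,W\} + b_3 X + b_4 W + b_5\mathcal{I}$, in the same way. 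After reducing both expressions to a PBW-type normal form (pushing $X$ to the left of $Y$ via $YX = XY - [X,Y]$ and eliminating cubic-or-higher powers of $Y$ through the AW relations), the top-degree monomials $X^2YX$ and $XYX^2$ match with coefficients $\rho\tau_1$ and $\rho\tau_2$ on both sides. Matching the lower-degree coefficients then yields a linear system whose unique solution provides $e_1, b_1, b_2, b_3, b_4, b_5$ as rational expressions in the $a_i$ and $\tau_i$.

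The same strategy applies to the second relation $[W,[W,X]]$: expand $W$ in both slots, distribute the commutators, and normalize using both AW relations (\ref{aw1})-(\ref{AWA}). The resulting polynomial has joint degree up to $5$ and degree at most $2$ in $Y$. Expanding the image of the right-hand side of (\ref{hawi2}) — which includes $\rho WXW$, $e_3 XWX$, $b_2 W^2$ and the lower-order pieces — and matching coefficients (with $b_2, b_3, b_4$ already fixed in the previous step) determines the remaining constants $e_2, e_3, e_4, b_1', b_3', b_6, b_7$. The compatibility conditions (\ref{JI_conds}) are then automatic: the paper has already derived them from the Jacobi identity applied to the abstract cubic relations (\ref{XZ_com})-(\ref{ZW_com}), and the Jacobi identity holds in any associative algebra, so the scalars read off from this AW realization must satisfy them.

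The main obstacle will be the size of the computation in the second step. Since $W$ has joint $X,Y$-degree two, the monomials $WXW$ and $W^2$ expand into expressions with up to six factors of $X$ and $Y$, and reducing them to normal form via the quadratic AW relations requires meticulous bookkeeping. What makes the matching tractable is that both sides share the same leading $\rho$-structure at every degree, so the reduction proceeds degree-by-degree with each stage imposing only finitely many scalar identities; moreover, the Jacobi-based restrictions (\ref{JI_conds}) guarantee that the apparently over-determined linear system is in fact consistent, removing the need to verify redundant relations by hand.
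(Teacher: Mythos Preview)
Your plan is the paper's plan: expand, reduce to a normal form, and solve the resulting linear system for the HAW structure constants. For the first relation this works exactly as you describe, and the paper proceeds the same way (using the $q$-commutators $G=[Y,X]_q$ and $\tilde G=[X,Y]_q$ of (\ref{G}) as bookkeeping devices rather than the plain commutator, which is only a cosmetic difference).

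The gap is in the second relation. When you expand $[W,[W,X]]$ together with $\rho\, WXW$, $e_3\, XWX$ and $b_2 W^2$, you produce words of degree two in $Y$ --- in the paper's notation, the monomials $XYG$, $YX\tilde G$, $XG\tilde G$, $X\tilde G G$ of (\ref{mon2r}). These cannot all be reduced to a common normal form using only the two AW relations (\ref{aw1})--(\ref{AWA}): what closes the reduction is the AW Casimir element $\Omega$ of (\ref{Om}). The paper says this explicitly, and the structure constants $b_6$ and $b_7$ it obtains are not polynomials in the $a_i,\tau_i$ alone but contain $\Omega$ (see Appendix~\ref{AA}). Your proposal anticipates a purely scalar solution and invokes only ``the quadratic AW relations'' for the normalization; carried out as written, the coefficient matching would not close --- you would be left with residual degree-four terms that cannot be absorbed into $b_6 X + b_7\mathcal{I}$ unless you recognize them as a multiple of the central element $\Omega$. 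Your Jacobi-identity remark is correct but beside the point here: (\ref{JI_conds}) constrains $e_3, b_1', b_3'$ and says nothing about $b_6, b_7$.
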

\begin{proof}
Firstly, we show that relation (\ref{hawi1}) holds for (\ref{map}). Using the convenient notation (\ref{G}), according to (\ref{map}), the image of the element $\boW$  in the AW algebra reads:
\ba
\boW \mapsto  \tau_0 + \tau_3 X + \tau_4 Y + \frac{(q^{-1}\tau_1 + q\tau_2)}{(q^2-q^{-2})} G + \frac{(q\tau_1 + q^{-1}\tau_2)}{(q^2-q^{-2})}\tilde{G}\ .\label{I1}
\ea 

Inserting (\ref{I1}) into (\ref{hawi1}), relation (\ref{hawi1}) reads as a combination of the monomials:
\ba
X,Y,G,\tilde{G}, XG,X\tilde{G}, X^2,X^3, X^2G,  X^2\tilde{G}\ \label{moni}
\ea
and
\ba
XY,YX, GX, \tilde{G}X, XGX, X\tilde{G}X, GX^2, \tilde{G}X^2.\label{monr}
\ea
The task is to identify the subset of irreducible monomials with respect to the defining relations of the AW algebra. According to the defining relations (\ref{aw1}) and using (\ref{G}), one has:
\ba
GX &=& q^2XG + a_1qX^2 + a_2q (XY+YX) + a_3q X + a_4q Y + a_5q \ ,\label{red1}\\
\tilde{G}X &=& q^{-2}X\tilde{G} - a_1q^{-1}X^2 - a_2q^{-1} (XY+YX) - a_3q^{-1} X - a_4q^{-1} Y - a_5q^{-1} \ ,\nonumber
\ea
Then, the second set of monomials (\ref{monr}) is reduced in terms of the first set (\ref{moni}). If the relation (\ref{hawi1}) holds, it implies that each coefficient of the irreducible monomials is vanishing. This gives a system of equations which determines uniquely $e_1,b_1,b_2,b_3,b_4,b_5$ for generic parameters $\tau_i$.
Secondly, we show (\ref{hawi2}) holds for (\ref{map}). Inserting (\ref{I1}) into (\ref{hawi2}), relation (\ref{hawi2}) reads as a combination of terms that can be reduced in terms of the two ensembles of monomials:
\ba
X,Y,G,\tilde{G}, XG,X\tilde{G}, YG,Y\tilde{G},X^2,X^3,Y^2, X^2G,  X^2\tilde{G}, YXG,XY\tilde{G}\ \label{mon2i}
\ea
and
\ba
XYG,YX\tilde{G}, XG\tilde{G},  X\tilde{G}G\ \label{mon2r}
\ea
using (\ref{red1}) and
\ba
GY &=& q^{-2}YG - a_1q^{-1}(XY+YX) - a_2q^{-1}Y^2  - a_3q^{-1} Y - a_6q^{-1} X - a_7q^{-1} \ ,\nonumber\\
\tilde{G}Y &=& q^{2}Y\tilde{G} + a_1q (XY+YX) + a_2qY^2 + a_3q Y + a_6q X + a_7q \ ,\nonumber
\ea
which follow from the defining relations (\ref{AWA}) using (\ref{G}).  
Then , the ordering of the second set of monomials (\ref{mon2r}) in terms of the set (\ref{mon2i}) requires the introduction of the central element $\Omega$ (\ref{Om}). Note that the central element can be alternatively written in terms of the monomials $YX\tilde{G}$, $\tilde{G}^2$.
If the relation (\ref{hawi2}) holds, it implies that each coefficient of the irreducible set of monomials (\ref{mon2i}) is vanishing. This gives a system of equations which determines uniquely $e_2,e_3,e_4,b'_1,b_2,b'_3,b_6,b_7$ for generic parameters $\tau_i$.
\end{proof}

To keep the exposition fluid, the explicit expressions of the structure constants $\{e_i,b_i,b'_i\}$ in terms of the structure constants $\{a_i\}$, parameters $\{\tau_i\}$ and $\Omega$ are reported in Appendix \ref{AA}. Note that for certain relations among the parameters $\{\tau_i\}$ and structure constants $\{a_i\}$, the coefficients $e_i\equiv 0$. In this case,  the relations (\ref{hawi1})-(\ref{hawi2})  of HAW algebra degenerate to those of the AW algebra. An explicit example will be discussed in the next section.

\begin{remark} In the image of the HAW algebra by the map (\ref{map}), there exists scalar parameters such that the following relations hold:
\ba
&&[Y,[Y,W]]= \rho YWY + g_1 Y^3+  c_1 Y^2 + c_2 \{Y,W\} + c_3 Y + c_4 W + c_5 \mathcal{I} \ ,\label{haw3} \\
&&[W[W,Y]]= \rho WYW + g_2 Y^3 + g_3 YWY + g_4 Y^2 + c_1\{Y,W\} + c_2 W^2 + c_3 W + c_6 Y + c_7 \mathcal{I}  \label{haw4}
\ea
\end{remark}

The existence of the homomorphism (\ref{map}) suggests that the image of the HAW algebra by (\ref{map}) might be interpreted as a subalgebra of the AW algebra generated by elements that are fixed under the action of certain automorphisms of the AW algebra.

\subsection{A different presentation}
The image of the HAW algebra by the map of Proposition \ref{pr1} is closely related with the so-called ${\mathbb Z}_3$ symmetric form of the AW algebra. In order to get this form, we assume that at least one of the coefficients $\tau_1$ and $\tau_2$ is nonzero and that
\be
\tau_1 \ne \tau_2  \lab{tt_0} \ .\ee
For the analysis that follows, it is convenient to introduce the operator $\oZ$ given by:
\be \oZ = [X,Y] \ . \lab{Z_com} \ee
Then, according to (\ref{map}) we can always present $W$ as
\be
W= \frac{\tau_1+\tau_2}{2} \{X,Y\} + \frac{\tau_1-\tau_2}{2} \oZ + \tau_3 X + \tau_4 Y + \tau_0\ . \lab{W_Z} \ee
Under condition \re{tt_0} we can express $\oZ$ in terms of $W$ and $X,Y$:
\be
\oZ=\sigma_1 W + \sigma_2 \{X,Y\} + \sigma_3 X + \sigma_4 Y + \sigma_0 \lab{Z_W} \ee
where
\be
\sigma_1 = \frac{2}{\tau_1-\tau_2}, \: \sigma_2 = \frac{\tau_1+\tau_2}{\tau_2-\tau_1}, \: \sigma_3 = \frac{2 \tau_3}{\tau_2-\tau_1}, \: \sigma_4 = \frac{2 \tau_4}{\tau_2-\tau_1}, \: \sigma_0 =  \frac{2 \tau_0}{\tau_2-\tau_1} \ .\lab{sigma_tau} \ee
It is also convenient to introduce the following polynomials in the elements $X,Y$:
\be
\Phi_1(X,Y) = [X,[X,Y]] = \rho XYX + a_1 X^2 + a_2 \{X,Y\} + a_3 X + a_4 Y + a_5 \mathcal{I} \lab{Phi_1} \ee
and
\be
\Phi_2(X,Y) =[Y,[Y,X]]= \rho YXY + a_1\{X,Y\} + a_2 Y^2 + a_3 Y + a_6 X + a_7 \mathcal{I}, \lab{Phi_2} \ee
where we have expoited the AW relations \re{aw1} and \re{AWA}. Let us here record that the elementary operator identities
\be
\{X,\{X,Y \}\} = [X,[X,Y]] + 4 XYX = \Phi_1(X,Y) + 4 XYX \lab{CMAM_1} \ee
and
\be
\{Y,\{Y,X \}\} = [Y,[Y,X]] + 4 YXY = \Phi_2(X,Y) + 4 YXY \lab{CMAM_2} \ee
which will be useful.
Consider now the commutation relation $[X,Y]=\oZ$. Expressing $\oZ$ via \re{Z_W} we have 
\be
[X,Y] = \sigma_1 W + \sigma_2 \{X,Y\} + \sigma_3 X + \sigma_4 Y + \sigma_0 \mathcal{I} . \lab{XYZ} \ee
Consider the commutator $[W,X]$. From \re{W_Z}, we have
\ba
&&[W,X]= -\frac{\tau_1+\tau_2}{2} \{X,\oZ\} + \frac{\tau_1-\tau_2}{2} [\oZ,X] - \tau_4 \oZ = \lab{WX1}  \\
&& -\frac{\tau_1+\tau_2}{2} \{X,\oZ\}- \tau_4 \oZ - \frac{\tau_1-\tau_2}{2} \Phi_1(X,Y) \ ,\nonumber\ea
which can be simplified by substituting $\oZ$ from \re{Z_W}. It follows that
\ba
[W,X] &=&  -\frac{\tau_1+\tau_2}{2} \left(\sigma_1 \{W,X\} + \sigma_2 \{X,\{X,Y\}\} + 2 \sigma_3 X^2 + \sigma_4 \{X,Y\} + 2 \sigma_0 X \right) \lab{WX-1}  \\
&& \qquad - \tau_4 \left( \sigma_1 W + \sigma_2 \{X,Y\} + \sigma_3 X + \sigma_4 Y + \sigma_0 \mathcal{I} \right) -   \frac{\tau_1-\tau_2}{2} \Phi_1(X,Y).   \nonumber \ea
Finally, we can use formula \re{CMAM_1} in order to simplify the double anticommutator in \re{WX-1}:
\ba
[W,X] &=&  -\frac{\tau_1+\tau_2}{2} \left(\sigma_1 \{W,X\}  + 4 \sigma_2  XYX  + 2 \sigma_3 X^2 + \sigma_4 \{X,Y\} + 2 \sigma_0 X \right)    \lab{WX-2} \\
&& - \ \tau_4 \left( \sigma_1 W + \sigma_2 \{X,Y\} + \sigma_3 X + \sigma_4 Y + \sigma_0 \mathcal{I} \right) - \left(  \sigma_2 \, \frac{\tau_1+\tau_2}{2} + \frac{\tau_1-\tau_2}{2} \right)\Phi_1(X,Y). \nonumber \ea
Combining similar terms, we arrive at the relation
\be
[W,X] = \sigma_2 \{W,X\} + \kappa \: XYX + \mu_1 X^2 + \mu_2 \{X,Y\} + \mu_3 X + \mu_4 Y + \mu_5 \mathcal{I}. \lab{WX-3} \ee 
In a similar manner we can express the last commutator $[Y,W]$ like this:
\be
[Y,W]= \sigma_2 \{W,Y\} + \kappa \: YXY + \mu_1 \{X,Y\} + \mu_2 Y^2 + \mu_3 Y + \mu_6 X + \mu_7 \mathcal{I} \lab{WY-3} \ee 
where
\ba
&&\kappa = 2\,{\frac { \left( {q}^{2}\tau_{{1}}+\tau_{{2}} \right)  \left( \tau_{{1}}+{q}^{2}\tau_{{2}} \right) }{ \left( \tau_{{1}}-\tau_{{2}}
 \right) {q}^{2}}}, \quad \lab{kappa} \ea
and
\ba
&&\mu_1 = 2\,{\frac {\tau_{{2}}\tau_{{1}}a_{{1}} +\tau_{{2}}\tau_{{3}}+\tau_{{1}} \tau_{{3}}}{\tau_{{1}}-\tau_{{2}}}}, \quad 
\mu_2 = 2\,{\frac {\tau_{{2}}\tau_{{1}}a_{{2}}+ \tau_{{2}}\tau_{{4}}+\tau_{{1}}\tau_{{4}}}{\tau_{{1}}-\tau_{{2}}}}, \lab{mumu} \\
&&\mu_3=2\,{\frac {\tau_{{2}}\tau_{{0}}+\tau_{{1}}\tau_{{0}}+\tau_{{4}}\tau_{{3}}+\tau_{{2}}\tau_{{1}}a_{{3}}}{\tau_{{1}}-\tau_{{2}}}}, \quad    
\mu_4 = 2\,{\frac {{\tau_{{4}}}^{2}+\tau_{{2}}\tau_{{1}}a_{{4}}}{\tau_{{1}}-\tau_{{2}}}}, \quad \mu_5= 2\,{\frac {\tau_{{2}}\tau_{{1}}a_{{5}}+\tau_{{4}}\tau_{{0}}}{\tau_{{1}
}-\tau_{{2}}}},
 \nonumber \\
&&\mu_6= 2\,{\frac {{\tau_{{3}}}^{2}+\tau_{{2}}\tau_{{1}}a_{{6}}}{\tau_{{1}}-\tau_{{2}}}}, \quad \mu_7= 2\,{\frac {\tau_{{3}}\tau_{{0}}+\tau_{{2}}\tau_{{1}}a_{{7}}}{\tau_{{1}}-\tau_{{2}}}}. \nonumber \ea
It follows that \re{XYZ}, \re{WX-3} and \re{WY-3} can be presented in a slightly different form:
\ba
&&XY- p YX=  \tau_1^{-1}\left(  W - \tau_3 X - \tau_4 Y - \tau_0 \mathcal{I} \right)\ , \nonumber \\
&&WX-pXW = \t \kappa \: XYX + \t \mu_1 X^2 + \t \mu_2 \{X,Y\} + \t \mu_3 X + \t \mu_4 Y + \t \mu_5 \mathcal{I} \ , \lab{XYW_p} \\
&&YW-pWY = \t \kappa \: YXY + \t \mu_1 \{X,Y\} + \t \mu_2 Y^2 + \t \mu_3 Y + \t \mu_6 X + \t \mu_7 \mathcal{I} \ , \nonumber \ea 
where
\be
p= -\frac{\tau_2}{\tau_1}, \quad \t \kappa = \frac{\kappa}{1-\sigma_2}, \quad \t \mu_i = \frac{\mu_i}{1-\sigma_2} , \quad i=1,2,\dots, 7. \lab{t_km} \ee


By Proposition \ref{pr1},  the commutation relations \re{XYZ}, \re{WX-3} and \re{WY-3} - or equivalently \re{XYW_p} - provide a presentation of the AW algebra with basic generators $X,Y,W$ that differs from the original one \cite{Zh_AW}. Let us point out that the form of the defining relations in this presentation  looks very close to the structure of the relations occuring in the three dimensional Calabi-Yau and Sklyanin algebras \cite{BT}, \cite{EG}. In this presentation, let us observe that the r.h.s. of the relations \re{XYW_p} contain (apart from linear terms) only {\it symmetric} nonlinear combinations of the generators
\ba
XYX, \: YXY, \: \{X,Y\}, \: \{W,X\}, \: \{W,Y\}. \nonumber
\ea
 The highest degree of polynomials in $X,Y$ in the r.h.s. of these relations is three with the cubic terms in \re{XYW_p}  being $\kappa XYX$ and $\kappa YXY$. The condition $\kappa=0$, that has these terms vanishing is equivalent to one of two conditions: 
\be
\tau_2 = -q^2 \tau_1 \quad  \mbox{or}  \quad \tau_1=-q^2 \tau_2\ . \lab{tau_q_cond} \ee
Interestingly, it is well known that the conditions \re{tau_q_cond} convert the AW algebra with relation \re{aw1} and \re{AWA} to the ${\mathbb Z}_3$ symmetric form
\be
[X,Y]_q = W + \omega_3 \mathcal{I}, \qquad [Y,W]_q=X + \omega_1 \mathcal{I}, \qquad [W,X]_q = Y + \omega_2 \mathcal{I}, \lab{AW_Z3} \ee
where an appropriate affine transformation of generators 
\be
X \to \alpha_1 X +\beta_1 \mathcal{I}, \qquad Y \to \alpha_2 Y +\beta_2 \mathcal{I} \lab{affine_XY} \ee
is applied (it is always possible to choose the parameters $\alpha_i, \beta_i$ such that all nonlinear terms vanish in the r.h.s. of \re{AW_Z3}). The structure of the algebra \re{XYW_p} is very close to \re{AW_Z3} with the identification $p=q^2$ or $p=q^{-2}$ depending on the choice in \re{tau_q_cond}. We thus see that algebra \re{XYW_p} is a natural generalization of the cyclic (${\mathbb Z}_3$ symmetric) form of the $AW$ algebra \re{AW_Z3} which involves the image of the HAW operator $W$ in the AW algebra \re{W_Z}.

It is also interesting to note that the only exception to the above scheme occurs when either $\tau_1=\tau_2$ or $\tau_1=\tau_2=0$. In the first case, the element $W$ can be presented as
\be
W = \tau_1 \{X,Y\} + \tau_3 X + \tau_4 Y + \tau_0 \mathcal{I} .\lab{W_bt} \ee
Element $W$ of this type occur in band-time limiting problems \cite{GVZ_band}. Indeed, the commuting operators W in these signal processing  problems (see  \cite{GVZ_band} for details) must be symmetric in both eigenbases of the bispectral operators $X$ and $Y$ and this requires $W$ to be of the form \re{W_bt}.

In the second case (i.e. when $\tau_1=\tau_2=0$), the element $W$ is a simple linear combination of  the AW generators
\be
W = \tau_3 X + \tau_4 Y + \tau_0 \mathcal{I} \lab{W_lin} \ee
in which case  there is no analog of the  ${\mathbb Z}_3$ symmetric form. \vspace{2mm}

To conclude this section, let us mention that the Askey-Wilson algebra can be embedded into the $U_q(sl_2)$ algebra. Namely, the elements $X,Y$ are written in terms of twisted primitive elements of $U_q(sl_2)$. For instance, see \cite[Section 2.4]{BMVZ}. Using Proposition  \ref{pr1}, an embedding of HAW into $U_q(sl_2)$ follows. Also, using Proposition \ref{pr1} irreducible  finite dimensional representations of the HAW algebra can be constructed within the theory of Leonard pairs \cite{Ter}. In the next section, an infinite dimensional representation of the HAW algebra is given.

\section{The Heun-Askey-Wilson operator and bispectrality}
 In this section we construct an infinite dimensional representation of the Heun-Askey-Wilson algebra. Firstly, we build the most general $q$-deformed analog of the Heun operator. Secondly, in keeping with the last section, this operator is shown to be a bilinear combination of bispectral $q-$difference operators of Askey-Wilson type.  By Proposition {\ref{pr1}}, this operator together with the $q-$difference operators associated with $X,Y$ satisfy the defining relations of the algebra   HAW. The bispectrality properties of the Heun operator of Askey-Wilson type are also described.  

\subsection{The Heun-Askey-Wilson operator} Consider the vector space of the polynomials in the variable $x = z+z^{-1}$.
Let
 \be
T^+ f(z) = f(q^2z), \qquad T^- f(z) = f(z/q^2) \lab{TPM_def} 
\ee
be q-shift operators, where the functions $f(z)$ are assumed to be Laurent polynomials in $z$.  We will deal with second-order $q$-difference operators of the type
\be
{\mathcal W}= A_1(z) T^+ + A_2(z) T^- + A_0(z) \mathcal{I} \ . \lab{gen_W_def} 
\ee
The operator ${\mathcal W}$ can be considered as a generalization of the Askey-Wilson operator which entails the difference equation for the Askey-Wilson polynomials. We start with arbitrary functions $A_i(z)$ and ask when is the operator ${\mathcal W}$ mapping polynomials $\pi_n(x)$ of degree $n$ onto polynomials $\rho_{n+1}(x)$ of degree $n+1$. We thus demand that
\be
{\mathcal W} \pi_n(x) = \rho_{n+1}(x) , \quad n=0,1,2,\dots \lab{W_def} \ee
First, we find the necessary conditions for the operator ${\mathcal W}$ to satisfy the property (\ref{W_def}). Taking $\pi_n(x)=x^n, \;n=0,1,2, \dots$ For $n = 0,1,2$,  (\ref{W_def}) yields 
\be
A_1(z) + A_2(z) + A_0(z) = p_1(x) \lab{0_cond}
\ee
with $p_1(x)$ a polynomial of degree one. For $n=1$ and $n=2$, we obtain a linear system of equations with 
solution 
\be
A_1(z) = \frac{Q(z)}{z(1-z^2)(1-q^2z^2)}, \quad A_2(z) = A_1(1/z) , \quad A_0(z) = p_1(z) -A_1(z) - A_2(z), \lab{A012}
 \ee
where
\be
Q(z) =\sum_{k=0}^6 r_k z^k \lab{Q_z} 
\ee
is an arbitrary polynomial of degree six.

\begin{remark} When $r_0=r_6=0$, the polynomial $Q(z)$ becomes 
\be
Q=z\mathcal{P}(z), \lab{spec_Q} \ee
where $\mathcal{P}(z)$ is an {\it arbitrary} polynomial of degree four. Up to a common factor, one can present it as
\be
\mathcal{P}(z) =(1-\xi_1 z)(1-\xi_2 z)(1-\xi_3 z)(1-\xi_4 z). \lab{AW_P} \ee
Then the coefficients $A_1(z)$ and $A_2(z)$ become
\be
A_1^{(0)}(z) = \frac{\mathcal{P}(z)}{(1-z^2)(1-q^2z^2)}, \quad A_2^{(0)}(z) =  A_1^{(0)}(1/z). \lab{A12_AW} \ee

If additionally, one puts $p_1(x)=0$, one obtains the well known difference operator for the Askey-Wilson polynomials:
\be
{\mathcal Y} =  A_1^{(0)}(z)\left(T^+ - \mathcal{I} \right) + A_2^{(0)}(z)\left(T^- - \mathcal{I} \right). \lab{Y_AW} \ee
Thus, the Askey-Wilson operator ${\mathcal Y} $ is a special case of the Heun-AW operator ${\mathcal W} $ with the additional restrictions \re{spec_Q} and $p_1(x)=0$.
\end{remark}

So far, we have only obtained the necessary conditions for the operator ${\mathcal W} $ to satisfy the property (\ref{W_def}). It is easy to show that these conditions are also sufficient. We thus have 
\begin{pr}\label{pr4}
The most general HAW operator of the form \re{gen_W_def} which is mapping polynomials of degree $n$ in $x$ into polynomials of degree $n+1$ must have its coefficients  $A_0(z), A_1(z), A_2(z)$ given by the expressions \re{A012}
 with an arbitrary polynomial $Q(z)$ of degree six. In the special case when $Q(z)=z \mathcal{P}(z), \; p_1(x)=0$ (with $\mathcal{P}(z)$ a polynomial of degree four) one obtains generic Askey-Wilson operator ${\mathcal Y} $.
\end{pr}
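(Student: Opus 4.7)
The necessity direction is already set up in the paragraphs preceding the statement: applying (\ref{W_def}) to $\pi_n = x^n$ for $n = 0, 1, 2$ yields (\ref{0_cond}) and then a linear $2\times 2$ system whose unique solution is the displayed form (\ref{A012}), with $Q(z)$ an arbitrary polynomial of degree six and $A_2(z) = A_1(1/z)$ forced by the reflection symmetry $\pi_n(z) = \pi_n(1/z)$. My main task is therefore the converse: assuming (\ref{A012}), show that $\mathcal{W}$ maps every polynomial of degree $n$ in $x$ to a polynomial of degree $n+1$ in $x$, for every $n \geq 0$.

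The useful reformulation is to use $A_0 = p_1(x) - A_1 - A_2$ to write
\[
\mathcal{W}\pi_n \;=\; A_1(z)\,f_n^{+} \;+\; A_2(z)\,f_n^{-} \;+\; p_1(x)\,\pi_n, \qquad f_n^{\pm} := T^{\pm}\pi_n - \pi_n .
\]
Since $p_1(x)\,\pi_n$ is already a polynomial of degree $n+1$ in $x$, everything reduces to controlling $\Delta_n(z) := A_1 f_n^{+} + A_2 f_n^{-}$. Three properties must be checked: $\Delta_n$ is a Laurent polynomial in $z$, it is invariant under $z \to 1/z$, and its support in $z$ lies in $[-(n+1), n+1]$. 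The invariance follows from $A_2(z) = A_1(1/z)$ combined with the identity $f_n^{+}(1/z) = f_n^{-}(z)$, a direct consequence of $\pi_n(z) = \pi_n(1/z)$. The degree bound follows because $\deg Q = 6$ and $\deg[z(1-z^2)(1-q^2 z^2)] = 5$, so $A_1(z) = O(z)$ at infinity and $A_1 f_n^{+} = O(z^{n+1})$, and symmetrically for $A_2 f_n^{-}$.

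The delicate step, and the one I expect to be the main obstacle, is to show that the apparent poles of $A_1$ and $A_2$ are all removable in $\Delta_n$. The poles of $A_1$ lie at $z = \pm 1, \pm 1/q$ and those of $A_2$ at $z = \pm 1, \pm q$, so only at $z = \pm 1$ are both $A_1$ and $A_2$ singular. At $z = \pm 1/q$ the symmetry of $\pi_n$ gives $\pi_n(\pm q) = \pi_n(\pm 1/q)$, hence $f_n^{+}$ vanishes there and the apparent pole of $A_1 f_n^{+}$ is removable; the mirror argument handles $A_2 f_n^{-}$ at $z = \pm q$. At $z = \pm 1$, a short residue computation from (\ref{A012}) shows the residues of $A_1$ and of $A_2$ are opposite in sign, so $A_1 + A_2$ is regular there; combined with $f_n^{+} = f_n^{-}$ at $z = \pm 1$ (again by the symmetry of $\pi_n$), this makes the singular parts of $A_1 f_n^{+}$ and $A_2 f_n^{-}$ cancel. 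Once all apparent poles are cleared, $\Delta_n$ is a symmetric Laurent polynomial in $z$ of span at most $n+1$, i.e.\ a polynomial of degree at most $n+1$ in $x$. The Askey-Wilson specialisation of the last sentence is then immediate by substitution: $Q(z) = z\,\mathcal{P}(z)$ cancels the factor $z$ in the denominator of (\ref{A012}) to recover (\ref{A12_AW}), and $p_1(x) = 0$ reduces $A_0$ to $-A_1 - A_2$, so $\mathcal{W}$ collapses to $\mathcal{Y}$ of (\ref{Y_AW}).
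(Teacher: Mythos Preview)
Your proposal is correct and follows the same line as the paper, which dispatches sufficiency with the single sentence ``It is easy to show that these conditions are also sufficient.'' What you add is a careful execution of that claim: the decomposition $\mathcal{W}\pi_n = p_1(x)\pi_n + A_1 f_n^{+} + A_2 f_n^{-}$, the symmetry $A_2(z)f_n^{-}(z) = [A_1 f_n^{+}](1/z)$, the degree bound from $\deg Q = 6$, and the cancellation of the apparent simple poles at $z=\pm 1,\pm q^{\pm 1}$ using the identities $f_n^{+}(\pm q^{-1})=0$, $f_n^{-}(\pm q)=0$, and $\operatorname{Res}_{z=\pm 1}A_1 = -\operatorname{Res}_{z=\pm 1}A_2$ together with $f_n^{+}(\pm 1)=f_n^{-}(\pm 1)$. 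One small point to tighten: when you bound $A_2 f_n^{-}$ at infinity ``symmetrically'', note that this actually uses $A_1(w)=O(w^{-1})$ as $w\to 0$ (the simple pole at the origin), not $\deg Q=6$; equivalently, write $\Delta_n(z)=g(z)+g(1/z)$ with $g=A_1 f_n^{+}$ and bound $g$ at both $0$ and $\infty$.
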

Another important characterization of the HAW operator ${\mathcal W}$ is given by 
\begin{pr}
The HAW operator ${\mathcal W}$ defined by Proposition \ref{pr4} coincides with the algebraic Heun operator associated to the Askey-Wilson polynomials, that is
\be
{\mathcal W} = \tau_1 {\mathcal X}  {\mathcal Y}  + \tau_2 {\mathcal Y} {\mathcal X}  + \tau_3{\mathcal X}  + \tau_4 {\mathcal Y}  + \tau_0 \mathcal{I}, \lab{alg_W} \ee
with ${\mathcal Y}$ given by (\ref{Y_AW}) and the operator ${\mathcal X} $ being multiplication by $x$:
\be
{\mathcal X}  = z+z^{-1} \ .  \lab{X_def} \ee
\end{pr}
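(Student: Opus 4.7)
The plan is to verify that the right-hand side of (\ref{alg_W}) meets the two hypotheses of Proposition \ref{pr4}: it is a second-order $q$-difference operator in the canonical form (\ref{gen_W_def}) and it satisfies the degree-raising property (\ref{W_def}). The uniqueness built into Proposition \ref{pr4} will then identify it with the HAW operator $\mathcal{W}$ for suitable sextic $Q(z)$ and linear polynomial $p_1(x)$.

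To realize the right-hand side as $\tilde{A}_1(z)T^+ + \tilde{A}_2(z)T^- + \tilde{A}_0(z)\mathcal{I}$, I would use the explicit form (\ref{Y_AW}) of $\mathcal{Y}$ together with the operator identity $T^{\pm}\mathcal{X} = (q^{\pm 2}z + q^{\mp 2}z^{-1})T^{\pm}$. A direct computation yields
\baa
\mathcal{X}\mathcal{Y} &=& (z+z^{-1})A_1^{(0)}(z) T^+ + (z+z^{-1})A_2^{(0)}(z) T^- - (z+z^{-1})(A_1^{(0)}+A_2^{(0)})\mathcal{I},\\
\mathcal{Y}\mathcal{X} &=& A_1^{(0)}(z)(q^2 z+q^{-2}z^{-1})T^+ + A_2^{(0)}(z)(q^{-2}z+q^2 z^{-1})T^- - (z+z^{-1})(A_1^{(0)}+A_2^{(0)})\mathcal{I},
\eaa
whence, after adding $\tau_3\mathcal{X} + \tau_4\mathcal{Y} + \tau_0\mathcal{I}$,
$$\tilde{A}_1(z) = A_1^{(0)}(z)\bigl[\tau_1(z+z^{-1}) + \tau_2(q^2 z + q^{-2}z^{-1}) + \tau_4\bigr],$$
and $\tilde{A}_2(z) = \tilde{A}_1(1/z)$ thanks to the symmetry $A_2^{(0)}(z)=A_1^{(0)}(1/z)$. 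The bracketed factor, once multiplied by $z$, becomes a polynomial of degree two in $z$; combined with the quartic $\mathcal{P}(z)$ inside $A_1^{(0)}$, this gives $\tilde{A}_1 = Q(z)/[z(1-z^2)(1-q^2 z^2)]$ with $\deg Q\le 6$, in agreement with (\ref{A012}).

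The degree-raising property is immediate: $\mathcal{Y}$ preserves the degree of polynomials in $x$ (the bispectral property of the Askey--Wilson operator) while $\mathcal{X}$ raises it by exactly one, so each summand in (\ref{alg_W}) raises the degree by at most one. Combined with the shift-operator form above, Proposition \ref{pr4} forces the right-hand side to coincide with some $\mathcal{W}$. A parameter count (five $\tau_i$ plus four parameters of $\mathcal{P}$ versus seven coefficients of $Q$ plus two of $p_1$, nine each) shows that the correspondence is generically a bijection, so every $\mathcal{W}$ arises in this form.

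The main thing requiring care is the verification that $\tilde{A}_0 + \tilde{A}_1 + \tilde{A}_2$ is a polynomial of degree one in $x$. One can either invoke the degree-raising conclusion applied to the constant function $1$, or check directly that the $\tau_1$ and $\tau_4$ contributions cancel while the residual $\tau_2$-contribution equals $\tau_2\,\mathcal{Y}x$, which is linear in $x$ by the AW eigenvalue equation.
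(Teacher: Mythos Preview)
Your proof is correct and follows essentially the same route as the paper: compute the bilinear combination \re{alg_W} directly, read off the coefficient $\tilde A_1(z)=A_1^{(0)}(z)\bigl[(\tau_1+q^2\tau_2)z+(\tau_1+q^{-2}\tau_2)z^{-1}+\tau_4\bigr]$ (identical to the paper's formula \re{A12_0}), observe the symmetry $\tilde A_2(z)=\tilde A_1(1/z)$, and check that the sum of the three coefficients is linear in $x$. Your additions---the explicit degree-raising argument and the parameter count showing that the map $(\tau_i,\xi_i)\mapsto(Q,p_1)$ is generically a bijection---spell out the inverse direction that the paper dismisses with ``follows easily from the same considerations''.
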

The proof of this proposition is straighforward and relies on the fact that the HAW operator defined by \re{gen_W_def} and \re{A012} is recovered from the substitution into  (\ref{alg_W}) of ${\mathcal X} $ and ${\mathcal Y}$  as given respectively by (\ref{X_def})  and  (\ref{Y_AW}) .

In more details, let us consider the operator \re{alg_W} with arbitrary parameters $\tau_i$. A simple calculation shows that the operator \re{alg_W} has expression \re{gen_W_def}, where
\be
A_1(z) =  \left[ (\tau_1 + q^2 \tau_2) z + (\tau_1 + q^{-2} \tau_2) z^{-1} + \tau_4\right]  A_1^{(0)}(z)  , \quad A_2(z) = A_1(1/z) \lab{A12_0} \ee
and 
\be
A_1(z) + A_2(z) +A_0(z) = p_1(x) = \kappa_1 x + \kappa_0 \lab{sum_A_p} \ee
with the coefficients $\kappa_1, \kappa_0$ expressed in terms of $\xi_i$ and $\tau_i$. Expressions \re{A12_0} and \re{sum_A_p} show that for generic choice of the parameters $\tau_i$, the algebraic Heun  operator \re{alg_W} coincides with the operator \re{gen_W_def}. The inverse statement follows easily from the same considerations.

By Proposition {\ref{pr1}}, the HAW operator satisfies the defining relations of the Heun-Askey-Wilson algebra.   The explicit expressions of the structure constants of this algebra in terms of the parameters $\{\xi_i\}$ and $\{\tau_i\}$  of this realization are rather complicated. Nevertheless, one can consider the special case when all extra parameters vanish, i.e.
\be
e_i=g_i=0, \; i=1,2,3,4. \lab{eg=0} \ee
It happens only in two cases (in fact these cases are almost the same). Either
\be
\tau_2=-q^2 \tau_1, \; \tau_4 =0, \: \tau_3= \tau_1 (q^{-2}-1) \left(q^2+\xi_1 \xi_2 \xi_3 \xi_4\right) \lab{case_i} 
\ee
or
\be
\tau_1=-q^2 \tau_2, \; \tau_4 =0, \: \tau_3= \tau_2 (q^{-2}-1) \left(q^2+\xi_1 \xi_2 \xi_3 \xi_4\right). \lab{case_ii} \ee
\vspace{1mm}

It is seen, that in these cases all three pairs $({\mathcal X},{\mathcal Y}), ({\mathcal X},{\mathcal W})$ and $({\mathcal Y},{\mathcal W})$ satisfy the defining relations of the AW algebra.

Note also that for the above realization,  the central element $\Omega^{HAW}$ given by \re{Omega}  (this element plays the role of the Casimir operator of the Heun-Askey-Wilson algebra) reduces to the identity operator multiplied by a constant. This situation is similar to the case of the ordinary Askey-Wilson algebra  \cite{Zh_AW}, where the Casimir operator reduces to the identity operator as well.

In contrast to the tridiagonalization of the ordinary hypergeometric operator, or of the little q-Jacobi operator,
the algebraic Heun operator construct associated to the Askey-Wilson bispectral operator entails a rather rigid structure for the choices  \re{case_i} or \re{case_ii}. Indeed, only two parameters are arbitrary: $\tau_1$ and $\tau_0$ and this arbitraryness is equivalent to an affine transformation ${\mathcal  W}\to \alpha {\mathcal W} + \beta \mathcal{I}$. This means that the operator ${\mathcal W}$ does not contain any nontrivial parameter. This new operator ${\mathcal W}$ is nevertheless interesting because it gives a special nontrivial case when the HAW operator  leads to functions of the hypergeometric type.

\subsection{Bispectrality} The Heun-Askey-Wilson operator ${\mathcal W}$ possesses
properties relative to its action on specific bases that parallel those of the Heun-Hahn and $q$-Heun operators \cite{VZ_HH}, \cite{BVZ}, as we shall now indicate. Consider the Askey-Wilson polynomials $P_n(x)$ which are eigenfunctions of the operator ${\mathcal Y}$:
\be
{\mathcal Y} P_n(x) = \lambda_n P_n(x). \lab{YPP} \ee
The HAW operator ${\mathcal W}$ is tridiagonal in the basis $P_n(x)$:
\be
{\mathcal W} P_n(x) = r_{n+1}^{(1)} P_{n+1}(x) + r_{n+1}^{(2)} P_{n}(x) + r_{n+1}^{(3)} P_{n-1}(x) .\lab{W_P_tri} \ee
The operator ${\mathcal X}$ is tridiagonal in this basis:
\be
{\mathcal X} P_n(x) = P_{n+1}(x) + b_n P_n(x) + u_n P_{n-1}(x) \lab{X_P_tri} \ee
because \re{X_P_tri} is nothing else than the three-term recurrence relation for the AW polynomials. The operator ${\mathcal Y}$ is diagonal \re{YPP} and hence the operator ${\mathcal W}$ given by \re{alg_W} has the tridiagonal action \re{W_P_tri}. The coefficients $ r_{n+1}^{(2)}$ are easily expressed in terms of the tridiagonalization parameters and of the recurrence coefficients   $u_n,b_n$  and the eigenvalues $\lambda_n$ associated to the AW polynomials.

Besides the Askey-Wilson polynomial  basis, let us now consider also the following (nonmonic) basis:
\be
\vphi_n(x)= (\xi_1 z ; q^2)_n(\xi_1/z;q^2)_n
\lab{AW_basis} \ee
It is well known \cite{KM}, \cite{VZ_Newton} that the operator ${\mathcal Y}$ is bidiagonal with respect to the basis $\vphi_n(x)$: 
\be
{\mathcal Y} \vphi_n(x) = \lambda_n \vphi_n(x) + \nu_n \vphi_{n-1}(x), \lab{Y_phi} \ee
where 
\begin{eqnarray}
&& \lambda_n = \dfrac{(1-q^{2n})(1-\xi_1\xi_2\xi_3\xi_4 q^{2n-2})}{q^{2n}}\\
&& \nu_n = \dfrac{(q^{2n}-1)(1-\xi_1\xi_2 q^{2n-2})(1-\xi_1\xi_3q^{2n-2})(1-\xi_1\xi_4q^{2n-2})}{q^{2n}}
\end{eqnarray}
while the operator ${\mathcal X}$ is obviously bidiagonal in this basis
\be
{\mathcal X} \vphi_n(x) = \mu_n \vphi_{n+1}(x) + \rho_n \vphi_n(x) \lab{X_phi} ,
\ee
where 
\be   \mu_n = -\xi_1^{-1} q^{-2n} , \quad \rho_n = \xi_1 q^{2n} + \xi_1^{-1} q^{-2n} \ .
\ee
It follows that ${\mathcal W}$  is tridiagonal in this basis also.

\vspace{2mm}

\begin{remark}
\setcounter{equation}{0}
This striking similarity of the HAW algebra with the Askey-Wilson algebra leads to some important conclusions concerning possible discrete spectrum of the operator $\t X$ in the canonical form of the HAW  \re{tr_XXW_c}-\re{tr_WWX_c}. Indeed, assume that there exists a discrete basis $\vphi_n$ such that $\t X$ is diagonal and $\t W$ is tridiagonal in this basis:
\be
\t X \vphi_n= \lambda_n \vphi_n, \quad \t W \vphi_n = a_{n+1} \vphi_{n+1} + g_n \vphi_n + a_n \vphi_{n-1} \lab{XW_phi} \ee    
Then from commutation relations  \re{tr_XXW_c1}-\re{tr_WWX_c2} we conclude that $\lambda_n$ satisfy the relations
\be
\lambda_n^2 + \lambda_{n+1}^2 - \left(q^2 + q^{-2}\right) \lambda_n \lambda_{n+1} = \t b_4 \lab{eql_1} \ee
and
\be
\lambda_{n+1} + \lambda_{n-1} - \left(q^2 + q^{-2}\right) \lambda_n =0 \ .\lab{eql_2} \ee
From \re{eql_1}-\re{eql_2} we derive the explicit expression for $\lambda_n$:
\be
\lambda_n = \xi_1 q^{2n} + \xi_2 q^{-2n}, \lab{lambda_AW} \ee 
where the parameters $\xi_1, \xi_2$ satisfy the condition
\be
\xi_1 \xi_2 = -\t b_4 \: \left(q^2 - q^{-2}\right)^{-2} \lab{xi_xi_b} \ee
We thus see that the operator $\t X$ has the same ``classical" (i.e. Askey-Wilson) spectrum \re{lambda_AW} as generators of the Askey-Wilson algebra. In contrast, the operator $\t W$ is "nonclassical". A detailed study of the representation  \re{XW_phi} and  (\ref{X_phi}) will be deferred to another publication.
\end{remark}

\vspace{2cm}

\section{Conclusion}
\setcounter{equation}{0}
Let us sum up now. This paper has been concerned with the Heun type extension of the algebra and operators associated to the important class of Askey-Wilson orthogonal polynomials. Anticipating on the concrete realization, we have first introduced abstractly the Heun-Askey-Wilson (HAW) algebra in terms of generators and relations and discussed its universal structure in terms of symmetric polynomials in non-commuting variables. The expression for the central element of this algebra was provided and a canonical presentation analogous to the one of the Askey-Wilson algebra has been obtained. We have then exhibited explicitly an embedding of the HAW algebra into the Askey-Wilson one. This has allowed to obtain yet another presentation related to the ${\mathbb Z}_3$ symmetric form of the AW algebra.

We have then focused on the realization of the HAW algebra that results when one introduces the Heun operator of Askey-Wilson type that generalizes to the Askey-Wilson lattice the standard Heun operator in the continuum. Quite strikingly, it was seen that there is coincidence between the Algebraic Heun Operator construction applied to the AW bispectral operators and the determination of the most general second order q-difference operator that raises by one the degree of polynomials in $x = z+z^{-1}$. This has the immediate implication that the Heun-Askey-Wilson operator is tridiagonal on the basis of Askey-Wilson polynomials and that it has in fact the same property in the AW basis.

These advances raise a number of questions. We already mentioned that it would be interesting to investigate if the HAW algebra could identified as a subalgebra of the AW algebra generated by elements that are fixed under the action of certain automorphisms of the AW algebra. We closed the last section by bringing up representation theory issues that we plan to explore. It would evidently be of great interest to use the present study as a stepping stone to undertake in the same spirit the study of multivariable Heun operators and to look at their applications in the realm of integrable models. We hope to report on some of these questions in the near future.

\vspace{3mm}

{\Large\bf Acknowledgments}

PB and AZ would wish to acknowledge the hospitality of the CRM during the course of this work. AZ is grateful for V.Roubtsov for fruitful discussions. PB thanks N. Cramp\'e for help with MAPLE. PB is supported by the CNRS.  The research
of L.V. is funded in part by a discovery grant from the Natural Sciences and Engineering Research
Council (NSERC) of Canada. The work of A.Z. is supported by the National
Science Foundation of China (Grant No.11771015).

\vspace{15mm}

\begin{appendix}

\section{Structure constants}\label{AA}
We provide in this appendix the explicit formulas for the structure constants of the HAW algebra in terms of those of the AW algebra and of the parameters $\{\tau_i\}$ that define the embedding of the former algebra in the latter.
For generic parameters $\tau_0,\tau_1,\tau_2,\tau_3,\tau_4$, the solution is unique, given by:
\ba
b_1 &=& (\tau_1+\tau_2)a_3 - (q-q^{-1})^2\tau_0 + \tau_4a_1 -2\tau_3a_2 \ ,\nonumber\\
b_2 &=& a_2\ ,\nonumber\\
b_3 &=& (\tau_1 + \tau_2)a_5 +  - \tau_3a_4 + \tau_4a_3  -2\tau_0a_2 \ ,\nonumber\\
b_4 &=& a_4 \ ,\nonumber\\
b_5 &=& -\tau_0a_4 + \tau_4 a_5 \ ,\nonumber\\
e_1 &=& (\tau_1+\tau_2)a_1 - (q-q^{-1})^2 \tau_3 \ .\nonumber
\ea
and
\ba
b'_1 &=&     (( {\tau_1}+{\tau_2})a_2 + \tau_4)a_1 -(q^2+q^{-2})\tau_3a_2 + ({\tau_1}+{\tau_2})a_3 -(q-q^{-1})^2 \tau_0
\ ,\nonumber\\
b'_3 &=&  ( {\tau_1}+{\tau_2})a_4a_1-
2 \tau_0a_2+\tau_4a_3+ (1-q^2-q^{-2}) \tau_3 a_4+\tau_1a_5+a_5\tau_2
 \ ,\nonumber
\ea
\ba
b_6 &=&   \tau_1\tau_2 (-a_1a_2a_3 +a^2_2a_6 -a_3^2 + a_4a_6 -(q^2+q^{-2})a_1a_5 + (q+q^{-1})^2a_2a_7)    \nonumber\\
&& + (\tau_1 + \tau_2)(-2\tau_0 a_1a_2 - \tau_3a_1a_4 + \tau_4 a_2a_6 - 2\tau_0a_3 - (q^2+q^{-2})\tau_3a_5 + (q+q^{-1})^2\tau_4a_7)\nonumber\\
&& -2\tau_0\tau_4a_1 +  \tau_4^2a_6 + (q^2+q^{-2})(2\tau_0\tau_3a_2 - \tau_3\tau_4 a_3) + (q^2+q^{-2}-1)\tau_3^2a_4 + (q-q{-1})^2\tau_0^2\nonumber\\
&& -  (\tau_1 + q^{2}\tau_2)(\tau_1 + q^{-2}\tau_2)\Omega \ ,\nonumber\\
b_7 &=& \tau_1\tau_2 (-a_1a_2a_5 -\Omega a_2 -a_3a_5 + a_4a_7) - (\tau_1+\tau_2)(\tau_0a_1a_4 + \tau_0a_5 + \Omega\tau_4) \nonumber\\
&& + \tau_0^2a_2 -\tau_0\tau_4a_3 + \tau_4^2a_7 + (q^2+q^{-1}-1)\tau_3(\tau_0a_4 - \tau_4 a_5)\  ,\nonumber\\
e_2 &=&  (q^2+q^{-2})\left( -\tau_1\tau_2 a_1^2 -2\tau_3(\tau_1+\tau_2)a_1 + (\tau_1 + q^{2}\tau_2)(\tau_1 + q^{-2}\tau_2)a_6 -(q-q^{-1})^2\tau_3^2 \right)                           \ ,\nonumber\\
e_3 &=&  (q^2+q^{-2}+1)\left((\tau_1+\tau_2)a_1 -(q-q^{-1})^2\tau_3\right) \ ,\nonumber\\
e_4 &=&  -\tau_1\tau_2a_2a_1^2    -2(\tau_1+\tau_2)\tau_3 a_1a_2 - (1+q^2+q^{-2})\left(\tau_1\tau_2a_1a_3 +(\tau_1+\tau_2)\tau_0a_1 + \tau_3\tau_4a_1    \right) \nonumber\\
&& (\tau_1^2 + (1+2q^2+2q^{-2})\tau_1\tau_2 + \tau_2^2)a_2a_6 +  (-1+2q^2+2q^{-2})\tau_3^2a_2 \nonumber\\
&& +   (1+q^2+q^{-2})\left((\tau_1+\tau_2)(-\tau_3a_3 + \tau_4a_6) +  (\tau_1 + q^{2}\tau_2)(\tau_1 + q^{-2}\tau_2)a_7   + (q-q^{-1})^2\tau_3\tau_0 \right)
\ .\nonumber
\ea

\end{appendix}

\vspace{2mm}

\bb{99}

\bi{BMVZ} P. Baseilhac, X. Martin, L.Vinet and A. Zhedanov, {\it Little  and big q-Jacobi polynomials and  the Askey-Wilson algebra}, arXiv:1806.02656.

\bi{BVZ} P. Baseilhac, L. Vinet and A. Zhedanov, {\it The q-Heun operator of the big q-Jacobi type and the q-Heun algebra}, arxiv:1808.06695.

\bi{BT} R. Berger and R. Taillefer, {\it Poincare-Birkhoff-Witt deformations of Calabi-
Yau algebras}. J. Noncommut. Geom., {\bf 1}(2): 241--270, 2007. https://arxiv.org/abs/math/0610112

\bi{Diejen} J.F. van Diejen, {\it Integrability of difference Calogero-Moser systems}, J. Math. Phys. {\bf 35} (1994), 2983--3004.

\bi{EG} P. Etingof and V. Ginzburg, {\it Noncommutative del Pezzo surfaces and Calabi-Yau algebras}  https://arxiv.org/abs/0709.3593

\bi{GIVZ} V. Genest, M.E.H. Ismail, L. Vinet, A. Zhedanov {\it Tridiagonalization of the hypergeometric operator and the Racah-Wilson algebra}, arXiv:1506.07803.


\bi{GLZ_Annals} Ya. A. Granovskii, I.M. Lutzenko, and A. Zhedanov, {\it Mutual integrability, quadratic algebras,
and dynamical symmetry}. Ann. Phys. {\bf 217} (1992),  1--20.

\bi{GVZ_Heun} F.A. Gr\"unbaum, L. Vinet and A. Zhedanov, {Tridiagonalization and the Heun equation}, J.Math.Phys. {\bf 58}, 031703 (2017), arXiv:1602.04840.

\bi{GVZ_band} F.A. Gr\"unbaum, L. Vinet and A. Zhedanov, {Algebraic Heun operator and band-time limiting},  arXiv:1711.07862.


\bi{IK1} M. E. H. Ismail and E. Koelink. {\it Spectral analysis of certain Schr\"odinger operators}. SIGMA
{\bf 8} (2012) 61-79, arXiv:1205.0821.

\bi{IK2} M. E. H. Ismail and E. Koelink. {\it The J-matrix method}.
Adv. Appl. Math. {\bf 56}  (2011) 379-395, arXiv:0810.4558.

\bi{KM} E.G. Kalnins and W. Miller, {\it Symmetry techniques for q-series: Askey-Wilson polynomials}, Rocky Mount. J.Math. {\bf 19} (1989), 223-230.

\bibitem{KLS} R. Koekoek, P.A. Lesky, and R.F. Swarttouw. {\it Hypergeometric orthogonal polynomials and their q-analogues}. Springer, 1-st edition, 2010.




\bi{Ruijsenaars} Ruijsenaars S.N.M., {\it Integrable $BC_N$ analytic difference operators: hidden parameter symmetries and eigenfunctions},
in New Trends in Integrability and Partial Solvability, NATO Sci. Ser. II Math. Phys. Chem.,
Vol. 132, Kluwer Acad. Publ., Dordrecht, 2004, 217--261.

\bi{Takemura} K. Takemura {\it On $q$-deformations of Heun equation}, arxiv:1712.09564.

\bi{Ter} P.Terwilliger, {\it Two linear transformations each tridiagonal with respect to an eigenbasis of the other}, Lin.Alg.Appl. {\bf 330} (2001), 149--203, arXiv:math/0406555.


\bi{VZ_Newton} L. Vinet and A. Zhedanov, {\it Hypergeometric Orthogonal Polynomials with respect to Newtonian Bases}, SIGMA {\bf 12} (2016), 048, 14 pages.

\bi{VZ_HH} L. Vinet and A. Zhedanov, {\it The Heun operator of Hahn type}, arxiv:1808.00153.

\bi{Zh_AW} A. S. Zhedanov, {\it "Hidden symmetry" of Askey-Wilson polynomials}, Theoret. and Math. Phys. {\bf 89}
(1991), 1146-1157.



\end{thebibliography}

\end{document}